\newcommand{\ignore}[1]{}
\newtheorem{theorem}{Theorem}
\newtheorem{remark}[theorem]{Remark}
\newcommand{\p}{\partial}
\newcommand{\T}{\theta}
\newcommand{\n}{\hat{n}}
\newcommand{\hro}{\hat\rho}
\newcommand{\hsig}{\hat\sigma}
\renewcommand{\r}{\rangle}
\renewcommand{\l}{\langle}
 \newcommand{\hl}[1] {\textcolor{black}{#1}}
  \newcommand{\hlr}[1] {\textcolor{black}{#1}}
   \definecolor{greener}{rgb}{0.0, 0.26, 0.15}
   \definecolor{rose}{rgb}{0.28, 0.02, 0.03}
   \newcommand{\hlg}[1] {\textcolor{black}{#1}}
   \newcommand{\hlb}[1] {\textcolor{black}{#1}}
  \newcommand{\hlrose}[1] {\textcolor{black}{#1}}
\author{Ahmad Yousefi}        
\newglossaryentry{eregex}{name={eregex},
description={An extended regular expression, i.e. a regular expression with the addition of backreferences}}
\newacronym{SRE}{SRE}{Synchronized Regular Expression}
\begin{document}
 
\titlepage
\dedication  

 
\specialhead{ABSTRACT}
\epigraph{And your Lord revealed to the bees: Make your homes in the mountains, the trees, and in what they construct, and then feed from any fruit and follow the ways your Lord has made easy for you. From their bellies comes forth liquid of varying colours, in which there is healing for people. Surely in this is a sign for those who reflect.}{\textit{The Bees \\ \textbf{Quran}}}
A unified formulation of the density functional theory is constructed on the foundations of entropic inference in both the classical and the quantum regimes. The theory is introduced as an application of entropic inference for inhomogeneous fluids in thermal equilibrium. It is shown that entropic inference reproduces the
variational principle of DFT when information about expected density of particles is imposed.\\
In the classical regime, this process introduces a family of trial density-parametrized probability distributions, and consequently a
trial entropy, from which the preferred one is found using the method of Maximum Entropy (MaxEnt).
In the quantum regime, similarly, the process involves introduction of a family of trial density-parametrized density matrices, and consequently a trial entropy, from which the preferred density matrix is found using the method of quantum MaxEnt.
As illustrations some known approximation schemes of the theory are discussed.  
 
\specialhead{ACKNOWLEDGMENT}

First and foremost, I'd like to thank my advisor Ariel Caticha for his incredible teachings on foundations of physics, intriguing philosophical discussions, his brilliant improvements to my lengthy calculations, and most importantly the freedom he gave me throughout the grad school.\\
I am sincerely thankful to Oleg Lunin, not only for his amazing QM and QFT classes, but also for all the hallway chats which clicked some of my deepest understandings. \\
I would like to express my appreciation to Pedram Safari for his insightful math lessons.\\   
I am grateful to the faculty and the alumni of the University at Albany Physics Department, specially Selman Ipek, Herbert Fotso, Daniel Robins, Kevin Knuth, Anna Sharikova, Carlo Cafaro, and Professor Carolyn McDonald for providing a pleasant work environment.\\
I must express my gratitude to my parents and my brother Ali, Shahnaz, and Mohammad for their love, nurture, trust, and never ending supports.\\ 
Finally I am truly grateful to my best friend, my sharpest co-thinker, and my beloved wife Parisa for standing by me throughout this journey, and particularly for teaching me computational chemistry.

\tableofcontents        

 
\chapter{Background}
\section{Inductive Inference}
The process of drawing conclusion from available information is called \textbf{inference} \cite{1entropicphysics}. In general, the available information may be insufficient to make a unique assessment of truth, this process of reasoning with incomplete information is called \textbf{inductive inference}. The methods deployed for such reasoning are those of probability theory and entropic inference.\\
The method of entropic inference has its roots in works of Boltzmann, Maxwell and Gibbs. From works of the three branch out the field of Statistical Mechanics which aims for explaining, predicting, and/or understanding the  macroscopic properties of thermodynamic systems, using the dynamics of the constituent particles of the system of interest, and probability theory. About a half century later, Shannon \cite{1shannon} introduced a notion of entropy $S[p]=\sum_ip_ilogp_i$ which measures the amount of missing information in communication theory, but it was doubted that the Shannon's entropy \hlr{had} anything to do with the thermodynamic entropy, heat, pressure, etc. until Jaynes \cite{1Jaynes1,1Jaynes2} made it clear that the similarity is not accidental; Jaynes showed that Thermodynamics, as fundamental as it is, should be interpreted as a scheme for inference about nature. \hlr{From Jaynes' point of view -following Shannon's axioms entropy was the amount of information that is missing in a probability distribution}; with no concerns about updating probabilities.\\
In 1980, Shore and Johnson \cite{1shore}, proposed a new interpretation of the MaxEnt, the idea was to axiomatize the updating method of probabilities instead of entropy. They also proposed a variational principle for that matter.\\
In 1998, Skilling \cite{1skilling} justified the proposed variational principle by a simple idea: in order to find the preferred posterior, one must rank all the probability distributions in the order of preference and then pick the highest.\\
Decades later, in the $21^{st}$ century, Caticha took the information theory to the next level \cite{1caticha2004}, by deriving the Kullback-Leibler divergence as the unique relative entropy $S_r[p|q]$ which allows an ideal agent to update their degree of rational belief from a prior probability $q$ to the posterior probability $p$.
In this approach, information is the constraint on the probabilities in the updating process rather than anything stored in the system or in the probability distribution in hand. Caticha's framework has evolved over the past two decades in a series of articles \cite{1caticha2004,1caticha2006,1caticha2007,1caticha2014,1caticha2021,1kevin2017}. Along the way of these publications, not only the number of the design criteria of MaxEnt has been reduced from 5 to 2, but also it has been shown that \textit{the information can come in form of any type of constraints on the probability function, not just the expected values}. This particular flexibility of the theory opened room for applications in cases where information is acquired from sources other than data and experiment. As an important example, Caticha's theory of Entropic Dynamics \cite{1arielED,1arielHK} is formulated on the basis of such a MaxEnt.\\
From a \textbf{non-personalistic} Bayesian point of view, probability is the rational degree of belief of an ideal agent. From the entropic inference point of view, the information about a system of interest, is what would change the agent's degree of belief about the system. In other words, information is a constraint. \\
The entropic inference is a framework designed to update probabilities. Within the framework, entropy is the unique mathematical construct which conforms to certain design criteria described bellow. Therefore entropy is a tool that transcends any of its applications with heat, temperature, data, signals, time etc.. 
It is worth to emphasize again that our relative entropy is unique, in a sense that the design criteria are constraining to such an extent that only one functional is singled out, therefore inference with any other suggested functional would contradict at least one of the design criteria.
\subsection{Design of Entropic Inference}
The general 
strategy of entropic inference is motivated by an irreducibly simple idea primarily proposed by Skilling \cite{1skilling}: in order to find the posterior probability distribution, one must rank all the possible probability distributions in the order of preference and then pick the highest. Such a strategy immediately suggests a variational principle: in order to rank a family of probability distributions $P$ against a single prior $Q$, one constructs a real valued functional $S_r[P|Q]$ which we call entropy of $P$ relative to $Q$. \\
The desired method of updating probability, ought to produce a posterior which conforms to the information that changes the prior, but this constraint is not sufficient to pick one preferred posterior, additional rules are required to single out the one posterior probability, here are the suggested general criteria:
\newtheorem{GC}{GC}
\begin{GC}{Universality:}
The method must be of universal applicability, meaning that we should be able to apply the method to update any prior probability distribution subject to any relevant information, regardless of the nature of the proposition that the probability assigns a degree of belief to, or the nature or the source of the information.
\end{GC}
\begin{GC}{Parsimony:}
The prior probability must be updated minimally as to conform to the new information. In particular, in the absence of new information, an ideal agent must not change their belief.  
\end{GC}
\begin{GC}{Independence:}
If two systems are a priori believed to be independent, when we receive information about one, then it shouldn't matter if we include the other in the updating process or not.
\end{GC}
Skilling's strategy and the above general criteria, are implemented through the following specific Design Criteria (DC) of MaxEnt. Consider a lattice of propositions generated by a set $\mathcal X$ of mutually exclusive propositions \hlr{ labeled discretely by $\lbrace i=1,2,\dots\rbrace$ for simplicity:  
}
\newtheorem{DC}{DC}
\begin{DC}\label{dc1}
    For two non-overlapping subdomain $\mathcal D, \mathcal D'\subset \mathcal X$, information about the probabilities conditioned to one subdomain $\mathcal D$ should not change the probabilities that are conditioned to the other subdomain $\mathcal D'$ .
\end{DC}
\hlr{It is proven in the reference \cite{1caticha2021} that this criterion implies that} the non-overlapping subdomains of $\mathcal X$ contribute additively to the entropy $S_r[P|Q]$, in particular
\begin{equation}
    S_r[P|Q]=\sum_i F(p_i|q_i) \; ,
\end{equation}
\hlr{where $p_i$ and $q_i$ are probabilities that distributions $P$ and $Q$ assign to the $i^{th}$ proposition respectively.}
\begin{DC}
   If two systems are a priori believed to be independent, when we receive information about one system with no reference to the other, the updating process must yield the same result, either if we include the other system in our analysis or not. 
\end{DC}
The DC 2 is such a constraining criterion that singles out only one particular function from all possible  candidates, so that
\begin{equation}\label{1relent}
    S_r[P|Q]=-\sum_i p_ilog\frac{p_i}{q_i} \; .
\end{equation}
The conclusion is that the posterior probability is found by maximization of (\ref{1relent}) subject to the information in hand.

\subsection{Quantum Entropic Inference}
In quantum mechanics, the state of knowledge about a system is represented by a density matrix. Vanslette \cite{1kevin2017} showed that the Umegaki entropy \cite{1umegaki} \hlr{is the unique tool} to update density matrices in response to new information. \hlr{The \textbf{idea is to rank density matrices} according to} design criteria similar to what we reviewed above. The prescription is similar to the classical MaxEnt, given a prior density matrix $\hat \sigma$, and information in form of expected value of Hermitian operators $\l\hat A_i\r=A_i$, the posterior density matrix is found by maximizing the quantum entropy of $\hro$ relative to $\hsig$
\begin{equation}
    S_r[\hro|\hsig]=-Tr\hro[log\hro-\log\hsig]
\end{equation}
subject to the information and the normalization of the posterior
\begin{equation}
\delta\Bigg[ [S_r[\hro|\hsig]]-\alpha_0[Tr\hro-1]-\sum_i\alpha_i[Tr\hro\hat A_i-A_i]
\Bigg]_{\hro=\hro^*}=0 \; .
\end{equation}
Such a maximization yields to posterior density  matrix
\begin{equation}
	\hat\rho^*=\frac{1}{Z}e^{log\hat\sigma-\sum\alpha_i\hat A_i},
\end{equation}
which determines expected values.\\
\section{An Overview of this Work}
 This thesis is not particularly concerned with any specific application of the Density Functional Theory (DFT), nor is it concerned with DFT computer programs, this work is about DFT itself as a computational method which has been proved to be one of the most effective techniques to deal with quantum and classical inhomogeneous fluids. We will show that the variational principle of DFT is a special case of MaxEnt under appropriate constraints. Such a study integrates DFT with statistical mechanics and quantum mechanics as systematic applications of entropic inference and may offer a more general framework for approximations from which other theories such as mean field theory, dynamic DFT, Ginzburg-Landau theory, etc. can be recovered. Another advantage of such a study is that it clarifies the abiguity of what is exact and what is approximate in the DFT calculations, this clarification may help computational physicists and chemists to make better decisions as to which parts of the codes are modifiable and how they may want to compare their results to experiments and to other computational methods. \\
 In chapter 2 we briefly review the existing DFT in classical and quantum regimes. In chapter 3 we first review the subject of entropic inference and discuss how the well known formalisms of statistical mechanics are constructed on the foundation of such an inference tool, then we introduce the density functional formalism on the same basis and we recover the classical DFT theorem. Then, as an illustration, we will recover the approximation for a classical fluid with slowly varying density.
 In chapter 4, first we review quantum statistics and quantum MaxEnt. We show that any quantum MaxEnt produces a contact structure and therefore can be transformed by Legendre Transformations. Then we construct the density functioanl formalism and show that the DFT variatioanal equation is the Legendre Transform of the MaxEnt where the inference method is used to pick the best density matrix from a trial family of density parametrized density matrices.
 Next, we recover the almost constant density approximation and the famous Kohn-Sham model as illustrations.

 
\chapter{Introduction to Density Functional Theory}
\resetfootnote 
\hlg{
The Density Functional Theory (DFT) is an approach to the many-body particle systems. The theory was first introduced in 1964 in the context of quantum mechanics by Hohenberg and Kohn \cite{2kohn}. The article not only showed that for an electron gas at zero temperature, the electronic density distribution uniquely characterizes the ground state, but also proved that there exists a density functional variational equation which yields the equilibrium density as its solution, equal to the density calculated using the time independent Schr\"odinger equation.\\
This equivalence of the Hohenberg-Kohn (HK) equation and the time independent Schr\"odinger equation, extended to a new equivalence only a year later, when Mermin \cite{2mermin} proved that for an electron gas at finite temperature in the presence of an external potential, there exists a density functional variational equation which gives the same equilibrium density as gives the grand canonical ensemble formalism of statistical mechanics.\\
Unlike the usual historical order of theories in physics, the classical DFT emerged years after the quantum DFT. Evans formulated the finite temperature DFT in the classical regime \cite{2evans79} with motivation from} Mermin's work \cite{2mermin}.
\hlrose{
Before we proceed to a brief review of the three versions of the DFT theorem, it is appropriate to look at the theory from a historical point of view. \\
In 1929, Paul Dirac announced that the non-relativistic quantum mechanics is complete and approximation schemes are desired to simplify the sophisticated quantum-mechanical calculations \cite{2dirac}. Physicists and chemists followed Dirac's advice and developed the mathematical frame work to solve the Schr\"odinger equation of atoms, molecules, and solids over the years 1930's, 40's, 50's, and 60's. Meanwhile the first digital computers were made during the 40's and they had rapidly grown to powerful machines capable of performing quantum calculations of atoms and single molecules by 1965 \cite{2halfcentury}. At the time, computers were able to give an approximate solution to the Schr\"odinger equation of many-particle systems using the Hartree-Fock approximation method for molecules and atoms, while solid state physicists preferred to borrow analytical techniques from quantum field theory to study the electronic correlation effects in materials, which evolved to the whole branch of statistical physics of fields \cite{2kardarfields}.}
Given the above considerations, 
\hlrose{the theoretical aspects of DFT have been neglected compared to the enormous developments in its applications. Consequently the theory is mostly known as a computational modeling method for investigation of electronic structure of quantum mechanical systems. \\
This thesis, introduces the density functional formalism, as a straightforward application of the maximum entropy principle, in both classical and quantum regimes. In the upcoming chapters, \textbf{we will show that the DFT variational principle is nothing but the principle of maximum entropy subject to some appropriate constraints, rewritten under an appropriate Legendre Transformation.} Before that, let us see what the three theorems are without digging into the details of the proofs. 
}
    \section{DFT Theorem(s)}
\hlg{We are going to review the DFT theorems in an order opposite to the historical one. We start with the classical DFT, because it is conceptually the simplest, then we proceed to the thermal DFT and finally we conclude the HK theorem as a especial case of Mermin's theorem where $\beta\rightarrow \infty$ and $|\psi_0\r$ is non degenerate. This is exactly the same order that we have developed the entropic DFT in, therefore this review of the already existing theorems not only foreshadows the formal shape of a desired entropic DFT, but also will ease the grasp of the upcoming} chapters.
       \subsection{Classical DFT Theorem}
\hlrose{The original classical DFT theorem \cite{2evans79}, following its quantum precursors, writes down a theorem in a form that ensures existence of a particular function, makes a guess for the function, and then proceeds to show that the guessed function is indeed one of such a kind to complete the proof.}
\begin{theorem}
       For a classical fluid at equilibrium with fixed temperature $\frac{1}{\beta}$ and chemical potential $\mu$ in the presence of an external potential $v(x)$, there exists a density functional $F(\beta;n]$ 
       \footnote{The notation $F(\beta;n]$ means that $F$ is a function of $\beta$ and a functional of $n(x)$ .}
       independent of the external potential, such that
       \begin{equation}\label{omjadid222}
           \Omega(\beta;n]=\int d^3x v(x)n(x)+F(\beta;n]-\mu\int d^3x n(x) 
       \end{equation}
       assumes its minimum equal to the grand potential of the system at the equilibrium density
       \begin{equation}
           \bigg[
              \delta \Omega(\beta;n]
           \bigg]_{n(x)=n_{eq}(x)} =0 \quad \text{for fixed $\beta$} \; .
       \end{equation}
\end{theorem}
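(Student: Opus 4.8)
The plan is to follow the Hohenberg--Kohn--Mermin strategy transcribed into the classical grand canonical ensemble. The starting point is the variational characterization of equilibrium: over all normalized phase-space distributions $f$ (a sum over particle number together with an integral over the corresponding phase space), the grand potential functional
\begin{equation}
\Omega_v[f] = \mathrm{Tr}_{\mathrm{cl}}\!\left[ f\left( H_N - \mu N + \beta^{-1}\ln f \right)\right]
\end{equation}
is minimized by the grand canonical distribution $f^{(v)}_0 \propto e^{-\beta(H_N-\mu N)}$, with minimum value the equilibrium grand potential $\Omega_{eq}$, and this minimum is \emph{strict} --- this is the Gibbs inequality, i.e. the nonnegativity of the classical relative entropy, the very tool reviewed in the preceding sections. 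Next I would split $H_N = H_{\mathrm{int}} + \sum_i v(x_i)$ into an intrinsic part (kinetic energy plus interparticle interaction) and the one-body external term, so that $\Omega_v[f] = \tilde F[f] + \int d^3x\, v(x)\,n_f(x) - \mu\int d^3x\, n_f(x)$, where $\tilde F[f] := \mathrm{Tr}_{\mathrm{cl}}[f(H_{\mathrm{int}} + \beta^{-1}\ln f)]$ carries no reference to $v$ and $n_f(x)$ is the one-particle density associated with $f$.

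The crucial step is the injectivity lemma: two external potentials differing by more than an additive constant cannot produce the same equilibrium density. Assuming $f_0^{(v)} \ne f_0^{(v')}$ while $n_{f_0^{(v)}} = n_{f_0^{(v')}}$, I would apply the strict Gibbs inequality twice --- with reference potential $v$ tested against $f_0^{(v')}$, and with $v'$ tested against $f_0^{(v)}$ --- and add the two inequalities; the external and chemical-potential terms cancel because the two densities coincide, leaving the contradictory pair $\tilde F[f_0^{(v')}] > \tilde F[f_0^{(v)}]$ and $\tilde F[f_0^{(v)}] > \tilde F[f_0^{(v')}]$. Hence the equilibrium density determines the equilibrium distribution uniquely (and fixes $v$ up to a constant, absorbed into $\mu$). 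This legitimizes the ``guess'': for any density profile $n$ that is the equilibrium density of some external potential, let $f_0[n]$ be the associated distribution and set $F(\beta;n] := \tilde F[f_0[n]] = \mathrm{Tr}_{\mathrm{cl}}[\, f_0[n]\,(H_{\mathrm{int}} + \beta^{-1}\ln f_0[n])\,]$, which is manifestly a functional of $n$ and $\beta$ alone.

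It then remains to verify the variational claim. For the physical external potential $v$ with equilibrium density $n_{eq}$, evaluating $\Omega(\beta;n]$ at an arbitrary admissible profile $n$ reproduces, by the definition of $F$, the value $\Omega_v[f_0[n]]$; the strict Gibbs inequality makes this exceed $\Omega_v[f_0^{(v)}] = \Omega_{eq}$ whenever $n \ne n_{eq}$, while at $n = n_{eq}$ it equals $\Omega_{eq}$. Therefore $\Omega(\beta;n]$ attains its minimum, equal to the grand potential, at $n = n_{eq}$, and in particular its first variation vanishes there, which is the assertion of the theorem.

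I expect the real obstacle to be not the inequalities, which are routine, but the domain on which $F(\beta;n]$ is defined: the construction above only furnishes it on densities that are equilibrium-representable by some external potential, and stating the Euler--Lagrange condition $\delta\Omega = 0$ cleanly requires either restricting the admissible profiles to that class or extending $F$ by a constrained-search / Legendre-transform prescription --- precisely the entropic reformulation this thesis will develop in the chapters that follow. Secondary care is needed with the ``up to a constant'' freedom in reconstructing $v$ and with the regularity conditions that guarantee strict uniqueness of the equilibrium state.
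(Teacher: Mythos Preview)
Your proposal is correct and follows essentially the same route as the paper's proof, which is the Evans transcription of the Hohenberg--Kohn--Mermin argument: Gibbs inequality for the grand-potential functional, the injectivity lemma establishing that $v$ (hence $P_{eq}$) is a functional of $n_{eq}$, and then the definition $F(\beta;n]=\mathrm{Tr}_{cl}\,P_{eq}[T_N+U_N+\beta^{-1}\ln P_{eq}]$. The paper merely cites Evans' appendix for the injectivity step, whereas you spell out the double application of the strict Gibbs inequality; and your flagging of the $v$-representability domain issue is apt and is indeed what the thesis addresses with the entropic/Legendre reformulation in Chapter~3.
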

\hlg{
Now consider the trial grand potential
\begin{equation}\label{omjadid333}
    \Omega[P]=Tr_cP
    \bigg[
       H_N-\mu N+\frac{1}{\beta}logP
    \bigg] \; ,
\end{equation}
where
   \begin{equation}
       Tr_c\equiv \sum_{N=0}^\infty \frac{1}{N!}\int d^{3N} x d^{3N}p \; ,
   \end{equation}
and $P=P\lbrace N,x_1,\dots,x_N,p_q,\dots,p_N\rbrace$ is the probability that the system has $N$ particles with positions and momenta $\lbrace x_1,\dots,x_N,p_q,\dots,p_N \rbrace$, and the $N$ particle Hamiltonian function $H_N=H_N(x_1,\dots,x_N,p_q,\dots,p_N)$ is
\begin{equation}\label{hamiltoni1}
    H_N=T_N+U_N+V_N \; ,
\end{equation}
where $T_N$, $U_N$, and $V_N$ are the $N$ particle kinetic energy, interaction energy, and potential energy functions respectively.\\
Now we have two tasks to conclude the proof of the theorem, first we have to show that the functional $\Omega[P]$ assumes its minimum for equilibrium probability distribution $P_{eq}$ given by the grand canonical ensemble, and second, we need to show that $\Omega[P]$ is a functional of $n(x)$ which can be written as equation (\ref{omjadid222}). 
}
The former is assured by the fact that $\Omega[P]$ is a Legendre Transform of the grand-canonical entropy
\hlg{
\begin{equation}
    S[P]=-Tr_{cl}PlogP
\end{equation}
and therefore is minimum at equilibrium. In order to see that split $\Omega[P]$ into $\Omega[P_{eq}]$ and the rest
\begin{equation}\label{omjadid233}
    \Omega[P]=\Omega[P_{eq}]+\frac{1}{\beta}Tr_c\big[
    PlogP-PlogP_{eq} 
    \big] \; ,
\end{equation}
where $P_{eq}$ is the grand canonical probability distribution
\begin{equation}
    P_{eq}=\frac{e^{-\beta(H_N-\mu N)}}{\Xi} \; .
\end{equation}
The second term in the RHS of equation (\ref{omjadid233}) is non-negative following the Gibbs inequality, therefore
\begin{equation}
    \Omega[P]\geq \Omega[P_{eq}] \; .
\end{equation}
}
Now we shall prove that $\Omega[P]$ fits in the functional definition given in (\ref{omjadid222}).
\hlg{
Quoting word by word from \cite{2evans79}: "\textit{Since $P_{eq}$ is a function of $v(x)$ it follows that $n_{eq}(x)$ is also a functional of $v(x)$. We can also prove the more useful result that $P_{eq}$ is a functional of $n_{eq}(x)$. The proof (see Appendix 1 of [cite evans79])
proceeds by showing that, for a given interaction potential $U_N$, $v(x)$ is uniquely determined by $n_{eq}(x)$, i.e. only one v(x) can determine a given $n_{eq}(x)$. The resultant $v(x)$ then determines $P_{eq}$. Thus, it follows that $P_{eq}$ is a functional of $n_{eq}(x)$. This result implies that, for a given $U_N$, the functional
\begin{equation}\label{Fjadid1}
    F(\beta;n]\Big|_{n(x)=n_{eq}(x)}=Tr_{cl}P_{eq}\Bigg[
       T_N+U_N+\frac{1}{\beta}logP_{eq}
    \Bigg]
\end{equation}
is a unique functional of the equilibrium density $n_{eq}(x)$.}"\\
Combine (\ref{Fjadid1}) and (\ref{omjadid333}) to have
\begin{equation}
    \Omega(\beta;n]=\int d^3x v(x)n(x)+F-\mu\int d^3x n(x); ,
\end{equation}
which concludes the proof of the theorem.
}
       \subsection{Finite Temperature DFT Theorem}
\hlg{
The DFT was first introduced for a degenerate ground state and only later extended to finite temperature, however the pathway to recover ground state DFT from the thermal one is straightforward, namely $\beta\rightarrow\infty$ , therefore we prescribe the thermal DFT for the sake of generality; and leave the ground state case to the reader. Bellow is a brief review of the Mermin's DFT theorem \cite{2mermin} and its proof. 
}
\begin{theorem}{Finite temperature DFT:} For an electron gas in equilibrium with fixed temperature and chemical potential, in the presence of an external potential $v(x)$, there exists a density functional $F(\beta;n]$ independent of the external potential, such that
       \begin{equation}\label{omjadid1}
           \Omega(\beta;n]=\int d^3x v(x)n(x)+F(\beta;n]-\mu\int d^3x n(x) 
       \end{equation}
       assumes its minimum equal to the grand potential of the system at the equilibrium density
       \begin{equation}
           \bigg[
              \delta \Omega(\beta;n]
           \bigg]_{n(x)=n_{eq}(x)} =0 \quad \text{for fixed $\beta$} \; .
       \end{equation}
\end{theorem}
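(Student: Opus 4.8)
The plan is to follow Mermin's argument, which is the quantum-statistical analogue of the classical proof just given: phase-space probability distributions are replaced by density matrices on the Fock space of the electron gas. First I would introduce the trial grand potential functional on density matrices,
\begin{equation}
    \Omega[\hat\rho]=Tr\,\hat\rho\Big[\hat H-\mu\hat N+\frac{1}{\beta}\log\hat\rho\Big],\qquad \hat H=\hat T+\hat U+\hat V ,
\end{equation}
where $\hat T$, $\hat U$, $\hat V$ are the kinetic, electron-electron interaction, and external-potential operators and $\hat N$ is the number operator. Writing the grand canonical state $\hat\rho_{eq}=e^{-\beta(\hat H-\mu\hat N)}/\Xi$ and splitting
\begin{equation}
    \Omega[\hat\rho]=\Omega[\hat\rho_{eq}]+\frac{1}{\beta}Tr\,\hat\rho\big[\log\hat\rho-\log\hat\rho_{eq}\big] ,
\end{equation}
the correction term is non-negative by the quantum Gibbs inequality --- equivalently, $\Omega[\hat\rho]$ is the Legendre transform of the von Neumann (grand-canonical) entropy and hence minimal at equilibrium --- with equality iff $\hat\rho=\hat\rho_{eq}$. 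This delivers the ``assumes its minimum equal to the grand potential'' half of the claim, pending the rewriting of $\Omega$ in terms of $n(x)$.

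The crux is the Hohenberg-Kohn-Mermin lemma: the equilibrium density $n_{eq}(x)=Tr\,\hat\rho_{eq}\,\hat n(x)$ fixes $v(x)$ up to an additive constant, hence fixes $\hat\rho_{eq}$. I would prove it by contradiction. Suppose $v$ and $v'$ differ by more than a constant yet yield the same equilibrium density, with grand canonical states $\hat\rho_{eq}$, $\hat\rho'_{eq}$ and true grand potentials $\Omega$, $\Omega'$. Using $\hat\rho'_{eq}$ as a trial state for the Hamiltonian with potential $v$, strictness of the variational inequality above gives $\Omega<\Omega'+\int d^3x\,(v-v')\,n_{eq}$; the symmetric substitution gives $\Omega'<\Omega+\int d^3x\,(v'-v)\,n_{eq}$; adding the two yields $\Omega+\Omega'<\Omega+\Omega'$, a contradiction. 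Hence $v\mapsto n_{eq}$ is injective, so $\hat\rho_{eq}$ --- and in particular the potential-independent combination $\hat T+\hat U+\frac{1}{\beta}\log\hat\rho_{eq}$ --- is a functional of $n_{eq}$ alone.

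With the lemma in hand I would define, on the set of ($v$-representable) equilibrium densities,
\begin{equation}
    F(\beta;n]=Tr\,\hat\rho[n]\Big[\hat T+\hat U+\frac{1}{\beta}\log\hat\rho[n]\Big] ,
\end{equation}
with $\hat\rho[n]$ the unique grand canonical state carrying density $n$; this $F$ is manifestly independent of $v$. Substituting into $\Omega[\hat\rho]$ and using $Tr\,\hat\rho[n]\,\hat V=\int d^3x\,v(x)n(x)$ and $Tr\,\hat\rho[n]\,\hat N=\int d^3x\,n(x)$ reproduces (\ref{omjadid1}), while the variational inequality identifies $n_{eq}$ as the minimizer and the true grand potential as the minimum value, from which the stated stationarity condition follows. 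The main obstacle is not the variational step but the domain issue hidden in the lemma: the reductio only controls densities arising from some external potential, so $F(\beta;n]$ is a priori defined only on $v$-representable densities and the variational statement requires care --- or a constrained-search/Legendre reformulation over a cleaner domain, which is essentially the viewpoint the later chapters adopt. One also tacitly needs the partition function and the traces $Tr\,\hat\rho[n]\hat T$, $Tr\,\hat\rho[n]\hat U$ to be finite, which should be checked for the Coulomb interaction and the chosen class of potentials; the ground-state Hohenberg-Kohn theorem then emerges as the $\beta\to\infty$ limit with a non-degenerate $|\psi_0\rangle$.
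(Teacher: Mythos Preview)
Your proposal is correct and follows essentially the same route as the paper: introduce the trial grand potential $\Omega[\hat\rho]=Tr\,\hat\rho[\hat H-\mu\hat N+\beta^{-1}\log\hat\rho]$, split off $\Omega[\hat\rho_{eq}]$ and invoke the quantum Gibbs inequality to establish the minimum, then use the injectivity of $v\mapsto n_{eq}$ to define the $v$-independent $F(\beta;n]=Tr\,\hat\rho[n][\hat T+\hat U+\beta^{-1}\log\hat\rho[n]]$. The paper merely states that the second half is ``compatible with the classical proof'' and defers the uniqueness lemma to the literature; you have spelled out the Mermin reductio and flagged the $v$-representability caveat, which is more than the paper does at this point.
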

\hl{
The proof is identical to the proof of the classical DFT, except that we shall work in the quantum grand-canonical} formalism.
\hlg{
Consider the grand potential
\begin{equation}\label{omjadid3}
    \Omega[\hro]=Tr\hro
    \bigg[
       \hat H-\mu \hat N+\frac{1}{\beta}log\hro
    \bigg] \; ,
\end{equation}
where $\hro$ is the density matrix of the system and $\hat H$ and $\hat N$ are respectively the Hamiltonian and the particle number operator defined on the Fock space of the electron gas
\begin{equation}\label{hamiltoni1}
    \hat H=\hat T+\hat U+\hat V \; ,
\end{equation}
where $\hat T$, $\hat U$, and $\hat V$ are the kinetic energy, interaction energy, and potential energy operators respectively.\\
Now we have two tasks to conclude the proof of the theorem, first we have to show that the functional $\Omega[\hro]$ assumes its minimum for equilibrium density matrix $\hro_{eq}$ given by the grand canonical ensemble, and second, we need to show that $\Omega[\hro]$ is a functional of $n(x)$ which can be written as equation (\ref{omjadid1}). 
}The former is proved by introduction of a curve $\lambda$ in the space of density matrices that connects ,  assured by the fact that $\Omega[\hro]$ is a Legendre Transform of the grand-canonical entropy
\hlg{
\begin{equation}
    S[\hro]=-Tr\hro log\hro
\end{equation}
and therefore is minimum at equilibrium. In order to see that split $\Omega[\hro]$ into $\Omega[\hro_{eq}]$ and the rest
\begin{equation}\label{omjadid2}
    \Omega[\hro]=\Omega[\hro_{eq}]+\frac{1}{\beta}Tr_c\big[
    \hro log\hro-\hro log\hro_{eq} 
    \big] \; ,
\end{equation}
where $\hro_{eq}$ is the grand canonical density matrix
\begin{equation}
    \hro_{eq}=\frac{e^{-\beta(\hat H-\mu \hat N)}}{\Xi} \; .
\end{equation}
The rest of the proof is compatible with the classical proof given previously, and therefore we do not repeat it here, although we recommend the readers new to the subject to refer to references \cite{2kohn,2mermin}. 
}
%
 
\chapter{Entropic Classical DFT}
\resetfootnote 
\section{Introduction}
\hlb{The} Density Functional Theory was first developed in the context of quantum mechanics and only later extended to the classical regime. The theory was first introduced by Kohn and Hohenberg (1964)\cite{2kohn} as a computational tool to calculate the spatial density of an electron gas in the presence of an external potential at zero temperature. Soon afterwards, Mermin provided the extension to finite temperatures \cite{2mermin}. Ebner, Saam, and Stroud (1976)\cite{ebner} applied the idea to simple classical fluids, and Evans (1979) provided a systematic formulation in his classic paper \cite{2evans79} "The nature of the liquid-vapour interface and other topics in the statistical mechanics of non-uniform, classical fluids.".\\
The majority of physicists and chemists today are aware of the quantum DFT and the Kohn-Sham model \cite{kohnsham}, while fewer are familiar with the classical DFT, a historical review of quantum DFT and its vast variety of applications is found in \cite{2halfcentury,rise}. The classical DFT, similarly, is a "formalism designed to tackle the statistical mechanics of inhomogeneous fluids"\cite{newdev}, which has been used to investigate a wide variety of equilibrium phenomena, including surface tension, adsorption, wetting, fluids in porous materials, and the chemical physics of solvation.   
\\
\hlb{Just like the Thomas-Fermi-Dirac theory is usually regarded as a precursor of quantum DFT, the van der Waals' thermodynamic theory of capillarity under the hypothesis of a continuous variation of density \cite{van} can be regarded as the earliest work on classical DFT, without a fundamental proof of existence for such a variational principle.   
}
\\
\hlb{
"The long-term legacy of DFT depends largely on the
continued value of the DFT computer programs that
practitioners use daily."\cite{2halfcentury} The algorithms behind the computer programs, all starting from an original Hartree-Fock method to solve the N-particle Schr\"odinger equation, have evolved by many approximations and extensions implemented over time by a series of individuals, although the algorithms produce accurate results, they do not mention the HK variational principle. Without the variational principle the computer codes are suspect of being ad hoc or intuitively motivated without a solid theoretical foundation, therefore, the DFT variational principle not only scientifically justifies the DFT algorithms, but also provides us with a basis to understand the repeatedly modified algorithms behind the codes. 
}\\
In this work we derive the classical DFT as an application of the method of maximum entropy \cite{1entropicphysics,1Jaynes1,1Jaynes2,wdws,psm}. \hlb{This integrates the classical DFT with other formalisms of classical statistical mechanics (canonical, grand canonical, etc.) as an application of information theory. Our approach not only enables one to understand the theory from the Bayesian point of view, but also provides a framework to construct equilibrium theories on the foundation of MaxEnt. We emphasize that our goal is not to derive an alternative to DFT. Our goal is purely conceptual. We wish to find a new justification or derivation of DFT that makes it explicit how DFT fits within the MaxEnt approach to statistical mechanics. The advantage of such an understanding is the potential for future applications that are outside the reach of the current version of DFT.}
\\
In section \ref{cinference}, we briefly review entropic inference as an inference tool which updates probabilities as degrees of rational belief in response to new information. Then we show that any entropy maximization, produces a contact structure which is invariant under Legendre Transformations; this enables us to take advantage of these transformations for maximized entropy functions (functionals here) found from constraints other than those of thermal equilibrium as well as thermodynamic potentials.\\
In section \ref{preferred} we briefly review the method of relative entropy for optimal approximation of probabilities, \hlb{which allows us to derive and then generalize the Bogolyubov variational principle.} Then we apply it for the especial case wherein the trial family of probabilities are parametrized by the density function $n(x)$.  
\\
In section \ref{cdensityformalism}, the Density Functional formalism is introduced as an extension of the existing ensemble formalisms of statistical mechanics (canonical, grand canonical, etc.) and we show that the core \hlb{DFT theorem} is an immediate consequence of MaxEnt: we prove that in the presence of \hlb{an} external potential $v(x)$, there exists \hlb{a trial} density functional entropy $S_v(E; n]$ maximized at the equilibrium density. We also prove that this entropy maximization is equivalent to minimization of a density functional potential $\Omega_v(\beta;n]$ given by
\begin{equation}
    \Omega_v(\beta;n]=\int d^3x v(x)n(x)+F(\beta;n]
\end{equation}
where $F(\beta;n]$ is independent of $v(x)$. 
This formulation achieves two objectives. \textbf{i)} It shows that \hlb{the} density functional variational principle is an application of \hlb{MaxEnt} for non-uniform fluids at equilibrium and therefore, varying \hlb{the} density $n(x)$ in $S_v(E;n]$ \textbf{does not} imply that the functional represents entropy of any non-equilibrium system, this \hlb{trial} entropy, although very useful, is just a mathematical construct which allows us to incorporate constraints which are related to one another by definition. \textbf{ii)} By this approach we show that \hlb{the} Bayesian interpretation of probability liberates \hlb{the fundamental theorem of the} DFT from \hlb{an} imaginary grand-canonical ensemble, i.e. \hlb{the} thermodynamic chemical potential is appropriately defined without need to define microstates for varying number of particles.\\
\hlb{Finally, in section \ref{gradient}, as an illustration we discuss the already well-known example of a slowly varying inhomogeneous fluid.} We show that our entropic DFT allows us \hlb{to} reproduce the gradient approximation results derived by Evans \cite{2evans79}. There are two different approximations involved, \textbf{i)} \hlb{rewriting} \hlb{the} non-uniform direct correlation function \hlb{in terms of} \hlb{the} uniform one; and \textbf{ii)} the use of linear response theory to evaluate Fourier transform of direct correlation functions. The former assumes that \hlb{the} density is uniform inside each volume element, and the latter assumes that difference of densities for neighboring volume elements is small compared to their average. 
\section{Entropic inference}\label{cinference}
A discussion of the method of maximum entropy as a tool for inference is found in \cite{1entropicphysics}. Given a \textbf{prior} Probability Distribution Function (\textbf{PDF}) $Q(X)$, we want to find the \textbf{posterior} PDF $P(X)$ subject to constraints on expected values of functions of $X$.\\
Formally, we need to maximize \hlb{the} relative entropy
\begin{equation}
    S_r[P|Q]\equiv-\sum_X (PlogP-PlogQ) \; ,
\end{equation}
under constraints $A_i=\sum_X P(X) \hat A_i(X)$ and $1=\sum_X P(\hlb{X})$;
where $A_i$'s are real numbers, and $\hat A_i$'s are real-valued functions on \hlb{the} space of $X$, and $1 \leq i \leq m$ for $m$ number of constraints.\\
The maximization process yields \hlb{the} posterior probability 
\begin{equation}\label{posterior110}
    P(X)=Q(X)\frac{1}{Z}e^{-\sum_i\alpha_i \hat A_i(X)} \; , \quad \quad \text{where }  
    Z=\sum_X Q(X)e^{-\sum_i\alpha_i \hat A_i(X)}\;,
\end{equation}
and $\alpha_i$'s are Lagrange multipliers associated with $A_i$'s.\\
Consequently the maximized entropy is
\begin{equation}\label{ent101}
    S=\sum_i\alpha_i A_i+log Z\; .
\end{equation}
Now we can show that the above entropy readily produces a contact structure, calculate \hlb{the} complete differential of equation (\ref{ent101}) to define \hlb{the} vanishing one-form $\omega_{cl}$ as
\begin{equation}
    \omega_{cl}\equiv dS-\sum_i \alpha_i dA_i=0 \; .
\end{equation}
Therefore any classical entropy maximization with $m$ constraints produces a contact structure $\lbrace  \hlb{\mathbb T},\omega_{cl}\rbrace$ in which manifold $\mathbb T$ has $2m+1$ coordinates $\lbrace q_0,q_1,\dots q_m,p_1,\dots,p_m\rbrace$.\\
The physically relevant manifold $\mathbb M$ is an $m$-dimensional sub-manifold of $\mathbb T$, on which $\omega_{cl}$ vanishes; i.e. $\mathbb M$ is determined by $1+m$ equations
\begin{equation}
    q_0\equiv S(\lbrace A_i\rbrace)\; , \quad \quad p_i\equiv\alpha_i=\frac{\p S}{\p A_i}\; .
\end{equation}
Legendre Transformations defined as, 
\begin{equation}
    q_0\longrightarrow  \; q_0 -\sum_{j=1}^{l}p_jq_j\; ,
\end{equation}
\begin{equation}
    \nonumber
    q_i\longrightarrow  \; p_i \; ,\quad p_i\longrightarrow \; -q_i \; ,  \quad\quad \text{for } 1\leq i\leq l \; ,
\end{equation}
are coordinate transformations on space $\mathbb T$ under which $\omega_{cl}$ is conserved. It has been shown \cite{rajeev,HSTh} that the laws of thermodynamics produce a contact structure conforming to \hlb{the} above prescription, here we are emphasizing that the contact structure is an immediate consequence of \hlb{MaxEnt}, and therefore it can be utilized in applications of information theory, beyond thermodynamics.
\\
\section{Formalisms of Statistical Mechanics}
\hlrose{An equilibrium formalism of statistical mechanics is a relative entropy maximization process consisting of three crucial parts: i) The microstates that describe the system of interest. ii) The uniform prior probability distribution. iii) The constraints that represent our information about the system of interest.}
\begin{remark}[About Uniform Prior]
\hlrose{ Note that the choice of the uniform prior is not a consequence of maximum entropy in our formulation, it is rather chosen for the sake of ultimate honesty in the theory, at the very beginning, we have no information about available microstates expect that they all exist and are defined, in other words, before imposing inferential constraints we are equally ignorant about different microstates , so we have no reason to break this symmetry; thus comes the choice of uniform prior probability distribution in classical statistical mechanics, and the choice of identity prior density matrix in quantum statistical mechanics.  
}
\end{remark}
\subsection{Canonical Formalism}
\hlrose{The canonical formalism, obviously, is the most important formalism of statistical mechanics, it plays such an important role to -virtually- all statistical phenomena in physics, that Feynman called the canonical probability distribution \textbf{the key principle of statistical mechanics}\cite{cfeynman}. Feynman describes canonical distribution as the summit of statistical mechanics and then teaches several subjects of the field as different paths down the hill. our approach is similar to Feynman's, except that we take MaxEnt as the summit of statistical mechanics, and different formalisms appear as the proper pistes down the hill.\\
For the sake of simplicity, lets assume that the constituent particles of the system of interest are of the same kind. In canonical formalism the system of interest consists of N particles and is in thermal equilibrium with a heat bath. We mentioned earlier that any equilibrium formalism is founded on a trinity of elements, here are the three elements for canonical formalism:}
\begin{enumerate}
 \item Microstates are the positions and  the momenta of the N particles $\lbrace x_1,\dots,x_N,p_1,\dots,p_N\rbrace$. 
 \item The prior in the classical regime is given by uniform prior probability distribution on the $N$ particle phase space, therefore we have prior PDF, $Q(\lbrace x_i,p_i\rbrace)=\textit{const.}$  
 \item There are only two constraints in the canonical formalism, one for normalization $\langle 1 \rangle=1$ and another constraint for the total energy $\langle H\rangle=E$.
 \footnote{Note that both normalization and energy constraints have nature of inference, the former is imposed by laws of probability and the latter is imposed by the experimental assumption that we can measure change of total energy of any physical system of interest.}
\end{enumerate}
\hlrose{Now, following our argument and calculations in part \ref{cinference}, we must maximize the relative entropy $S_r[P|Q]$ subject to above constraints to find the posterior probability $P(\lbrace x_1,\dots,x_N,p_1,\dots,p_N\rbrace)$.\\
Analog to equations (\ref{posterior110}) and (\ref{ent101}), the MaxEnt yields the posterior probability distribution
\begin{equation}
    P(\lbrace x_i,p_i\rbrace)=\frac{1}{Z}exp{-\beta H(\lbrace x_i,p_i\rbrace)}
\end{equation}
and the maximized entropy 
\begin{equation}
    S=\beta E+logZ \; ,
\end{equation}
 where $Z(\beta)=Tr_{cl} exp{-\beta H(\lbrace x_i,p_i\rbrace)}$.}
\subsection{Grand-canonical Formalism}
\hlrose{
In the grand-canonical formalism, we are interested in a system which is not only in a thermal equilibrium with the bath, but also may exchange particles with it, therefore the number of particles in the system of interest is not fixed and can be any non-negative integer, here is the list of ingredient to construct the grand canonical formalism:
}
\begin{enumerate}
 \item The microstates of the system are determined by the number of particles and positions and momenta of them $\lbrace n,x_1,\dots,x_n,p_1,\dots,p_n\rbrace$. 
 \item The uniform prior probability distribution is given by, $Q(n,\lbrace x_i,p_i\rbrace)=\textit{const.}$  
 \item There are three constraints
 in grand-canonical formalism, one for normalization $\langle 1 \rangle=1$ one for total energy $\langle H\rangle=E$, and another one for expected number of particles $\langle n\rangle=N$.
\end{enumerate}
\hlrose{Again, analog to equations (\ref{posterior110}) and (\ref{ent101}), the MaxEnt yields the posterior probability distribution
\begin{equation}
    P(n,\lbrace x_i,p_i\rbrace)=\frac{1}{\Xi}e^{-\beta [H(\lbrace x_i,p_i\rbrace)-\mu n]} \; ,
\end{equation}
and the maximized entropy 
\begin{equation}
    S=\beta [E-\mu N]+log\Xi \; ,
\end{equation}
 where the grand canonical partition function $\Xi(\beta,\mu)=\sum_{n=0}^\infty Tr_{cl} e^{-\beta [H(\lbrace x_i,p_i\rbrace)-\mu n]}$ } determines the thermodynamic properties of the system. 

\section{Maxent and Optimal Approximation of Probabilities}\label{preferred}
The posterior PDF found from entropic inference, is usually too complicated to be used for practical purposes. A common solution is to approximate the posterior PDF by \hlb{a} more tractable family of PDFs $\lbrace p_\T\rbrace$ \cite{catichatseng}. Given the exact probability $p_0$, the preferred member of tractable family $p_{\T^*}$ is found by \hlb{maximizing the entropy of $p_{\T}$ relative to $p_0$:}
\begin{equation}\label{optimal}
    \frac{\delta S_r[p_\T|p_0]}{\delta \T}\Big|_{\T=\T^*}=0 \;.
\end{equation}
\hlb{The} density functional formalism is \hlb{a} systematic method in which the family of trial probabilities is parametrized by \hlb{the} density of particles; in section \ref{cdensityformalism} we shall use the method of maximum entropy to determine the family of trial distributions parametrized by $n(x)$, $p_\T\equiv p_{n}$. So that we can rewrite equation \ref{optimal} as
\hlb{
\begin{equation}\label{optimaln}
    \frac{\delta}{\delta n(x')}
    \Bigg[
       S_r[p_{n}|p_0]+\alpha_{eq}[N-\int d^3x n(x)]
    \Bigg]_{n(x)=n_{eq}(x)}=0 \;.
\end{equation}
}
We will see that the canonical distribution itself is a member of the trial family, therefore in this case, the exact solution to equation (\ref{optimaln}) is $p_0$ itself\hlb{:}  
\begin{equation}
    p_{n}\Big|_{n(x)=n_{eq}(x)}=p_0\; .
\end{equation}
\section{Density Functional Formalism}\label{cdensityformalism}
An equilibrium formalism of statistical mechanics is a relative entropy maximization process consisting of three crucial \hlb{elements}: i) \hlb{One must choose the microstates that describe the system of inference.} ii) \hlb{The prior is chosen to be uniform.} iii) \hlb{One must select the constraints that represent the information that is relevant to the problem in hand}.\\
In the density ensemble, microstates of the system are given as positions and momenta of all $N$ particles of the same kind, given \hlb{the} uniform prior probability distribution
\begin{equation}
   Q(\lbrace \vec x_1,\dots,\vec x_N;\vec p_1,\dots,\vec p_N\rbrace)=constant.
\end{equation}
Having in mind that we are looking for \textbf{thermal properties of inhomogeneous fluids}, it is natural to choose the density of particles $n(x)$ as computational constraint, and the expected energy $E$ as thermodynamic constraint, in which $n(x)$ represents the \textbf{inhomogeneity} and $E$ defines \hlb{the} thermal equilibrium.\\
Note that all constraints (computational, thermal, etc.) in the framework can be incorporated as inferential constraints and can be imposed as prescribed in section \ref{cinference}.\\   
The density \hlb{constraint} holds for every point in space, therefore we have $1+1+\mathbb R^3$ constraints, one for normalization, one for total energy, and one for  density of particles at each point in space; so we have to maximize \hlb{the} relative entropy
\begin{equation}\label{sr}
   S_r[P|Q]=-\frac{1}{N!}\int d^{3N}xd^{3N}p (PlogP-PlogQ)
   \equiv -Tr_c (PlogP-PlogQ),
\end{equation}
subject to constraints
\begin{subequations}\label{gheyds}
\begin{align}
    1=&\langle 1 \rangle, \quad
    E=\langle \hat H_v\rangle, \\    
    n(x)=&\langle \hat n_x\rangle \quad \text{where} \int d^3x n(x)=N,
\end{align}
\end{subequations}
where $\langle . \rangle\equiv \frac{1}{N!}\int (.)Pd^{3N}xd^{3N}p$. \hlb{The} classical Hamiltonian operator $\hat H$ and the particle density operator $\hat n_x$ are given as
\begin{equation}
    \hat H_v\equiv\sum_{i=1}^N
    v(x_i)
    +\hat K(p_1,\dots,p_N)+\hat U(x_1,\dots,x_N),
\end{equation}
\begin{equation}\label{densityop}
 \hat n_x\hlb{\equiv}\sum_i^N \delta(x-x_i).   
\end{equation}
The density $n(x)$ is not an arbitrary function; it is constrained by a fixed total number of particles,
\begin{equation}\label{secondary}
    \int d^3x n(x)=N.
\end{equation}
Maximizing (\ref{sr}) subject to (\ref{gheyds}) gives the posterior probability $P(x_1,\dots,x_N;p_1,\dots,p_N)$ as
\begin{equation}\label{postprob}
    P=\frac{1}{Z_v}e^{-\beta \hat H_v-\int d^3x \alpha(x)\hat n_x}\quad\text{under condition}  \int d^3xn(x)=N.
\end{equation}
where $\alpha(x)$ and $\beta$ are \hlb{Lagrange} multipliers.\\
\hlb{The Lagrange multiplier function $\alpha(x)$ is implicitely determined by
\begin{equation}
    \frac{\delta logZ_v}{\delta \alpha(x)}=-n(x)
\end{equation}
and by equation (\ref{secondary})
\begin{equation}
    -\int d^3x \frac{\delta logZ_v}{\delta \alpha(x)}=N \; .
\end{equation}
}
Substituting the trial probabilities from (\ref{postprob}) into (\ref{sr}) gives the \hlb{\textbf{trial}} entropy $S_v(E\hlb{;}n]$ as
\begin{equation}\label{entopen}
    S_v(E;n]=\beta E+\int d^3x \alpha(x) n(x)+log Z_v,
\end{equation}
where $Z_v(\beta;\alpha]$ is the \hlb{\textbf{trial}} partition function defined as
\begin{equation}
    Z_v(\beta;\alpha]=Tr_c e^{-\beta \hat H_v-\int d^3x \alpha(x)\hat n_x} \; .
\end{equation}
The equilibrium density \hlb{$n_{eq}(x)$} is that which maximizes $S_v(E\hlb{;}n]$ subject to $\int d^3xn(x)=N$:
\begin{equation}\label{f2cb}
    \frac{\delta}{\delta n(x')}\Bigg[
       S_v(E;n]+\alpha_{eq}[N-\int d^3xn(x)]
    \Bigg]_{n(x)=n_{eq}(x)}=0\quad \textit{for fixed E}.
\end{equation}
Next, perform a Legendre transformation and define \hlb{the} Massieu functional $\tilde S_v(\beta,n]$ as
\begin{equation}\label{massieu}
    \tilde S_v\equiv S_{\hlb{v}}-\beta E,
\end{equation}
so that we can rewrite  equation (\ref{f2cb}) as
\begin{equation}\label{massvar}
    \frac{\delta}{\delta n(x')}\Bigg[
       \tilde S_v(\beta;n]-\alpha_{eq}\int d^3xn(x)
    \Bigg]_{n(x)=n_{eq}(x)}=0 \quad \textit{for fixed $\beta$} \; .
\end{equation}
\hlb{Combine (\ref{entopen}), (\ref{massieu}), and (\ref{massvar}), and use the variational derivative identity $\frac{\delta n(x)}{\delta n(x')}=\delta(x-x')$ to find}
\begin{equation}\label{gold00}
    \int d^3x' \Bigg[
       \frac{\delta logZ_v(\beta;\alpha]}{\delta \alpha(x')}+n(x')
    \Bigg]\delta\alpha(x')
    =\int d^3x''
    \Bigg[
       \alpha_{eq}-\alpha(x'')
    \Bigg]\delta n(x'') \; .
\end{equation}
\hlb{The LHS of equation (\ref{gold00}) vanishes by (\ref{secondary}) and therefore the RHS must vanish for an arbitrary $\delta n(x)$ which implies that}
\begin{equation}\label{alpha3}
       \alpha(x)=\alpha_{eq}\; , \quad \text{and} \quad \frac{\delta logZ_v}{\delta \alpha(x)}\hlb{\Big|_{\alpha(x)=\alpha_{eq}}}=-n_{\hlb{eq}}(x) \; .
\end{equation}
Substituting (\ref{alpha3}) into (\ref{postprob}) yields the equilibrium probability distribution
\begin{equation}\label{finalpost3}
   P^*(x_1,\dots,x_N;p_1,\dots,p_N)=\frac{1}{Z_v^*}e^{-\beta\hat H_v-\alpha_{eq}\int d^3x\hat n_x}
   =
   \frac{1}{Z_v^*}e^{-\beta\hat H_v-\alpha_{eq}N}
\end{equation}
where $Z_v^*(\beta,\alpha_{eq})=Tr_c e^{-\beta \hat H_v-\alpha_{eq}N}.\label{zstar}$\\
From the inferential point of view, \textbf{the variational principle for the grand potential and the equilibrium density}\cite{2evans79} is proved at this point; we showed that for an arbitrary classical Hamiltonian $\hat H_v$, there exists \hlb{a trial} entropy $S_v(E;n(x)]$ defined by equation (\ref{entopen}), which assumes its maximum at fixed energy and varying $n(x)$ under \hlb{the} condition $\int d^3x n(x)=N$ at \hlb{the} equilibrium density, and gives \hlb{the} posterior PDF (equation \ref{finalpost3}) equal to that of \hlb{the} canonical distribution.\\   
The massieu function $\tilde S_v(\beta;n(x)]$ from equation (\ref{massieu}) defines \hlb{the} \textbf{density functional potential} $\Omega_v(\beta;n(x)]$ by
\begin{equation}\label{Omega3}
    \Omega_v(\beta;n]\equiv\frac{-\tilde S_v(\beta;n]}{\beta}
    =-\int d^3x\frac{\alpha(x)}{\beta}n(x)-\frac{1}{\beta}logZ_v(\beta;\alpha] \; , 
\end{equation}
so that the maximization of $S_v(E;n]$ (\ref{entopen}) in the vicinity of equilibrium, is equivalent to the minimization of $\Omega_v(\beta;n(x)]$ (\ref{Omega3}) around the same equilibrium point
\begin{equation}\label{asli3}
    \frac{\delta}{\delta n(x')}
    \Bigg[ \Omega_v(\beta;n]+\frac{\alpha_{eq}}{\beta}\int d^3xn(x)
    \Bigg]_{n(x)=n_{eq}(x)}=0 \; .
\end{equation}
After we find $\Omega_v$, we just need to recall that $\alpha(x)=-\beta\frac{\delta \Omega_v}{\delta n(x)}$ and substitute in equation (\ref{alpha3}) to \hlb{recover the} "core integro-differential equation"\cite{2evans79} of DFT as 
\begin{equation}\label{core3}
    \nabla\Big(
    \frac{\delta \Omega_v(\beta;n]}{\delta n(x)}
    \Big)_{eq}=0 
\end{equation}
which implies that
\begin{equation}\label{omvar}
    \Omega_{v;eq} \leq \Omega_v(\beta;n],
\end{equation}
 where
\begin{equation} \label{ome}
    \Omega_{v,eq}(\beta;n]=-\frac{\alpha_{eq}}{\beta}\int d^3x n(x)-\frac{1}{\beta}log Z_v^*(\beta,\alpha_{eq}).
\end{equation}
From equation (\ref{Omega3}) it is clear that
\begin{equation}\label{om}
    \Omega_v(\beta;n]=\int d^3x v(x)n(x)+\langle \hat K+\hat U\rangle-\frac{S_v(E;n]}{\beta} \; .
\end{equation}
It is convenient to define \hlb{the} \textbf{intrinsic density functional potential} $F_v$ as
\begin{equation}\label{intrins3}
    F_v(\beta;n]\equiv \langle \hat K+\hat U\rangle-\frac{S_v}{\beta}
\end{equation}
to have
\begin{equation}\label{om13}
    \Omega_v(\beta;n]=\int v(x)n(x)+F_v(\beta;n]\; .
\end{equation}
\hlb{Now we are ready to restate the fundamental theorem of the classical DFT:}
\begin{theorem}
The intrinsic functional potential $F_v$ is a functional of density $n(x)$ and is independent of the external potential:
   \begin{equation}
       \frac{\delta F_v(\beta;n]}{\delta v(x')}=0 \quad \text{for fixed $\beta$ and $n(x)$.}
   \end{equation}
\end{theorem}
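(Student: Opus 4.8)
The plan is to show that the entire dependence of the trial probability $P$ on the external potential gets absorbed, together with the Lagrange multiplier $\alpha(x)$, into a single effective field $u(x)\equiv\beta v(x)+\alpha(x)$, and that $F_v$ depends on $v$ and $\alpha$ only through $u$, while $u$ itself is pinned down by $n(x)$ and $\beta$ alone. Once that is established, holding $\beta$ and $n$ fixed automatically holds $u$ fixed, and the claimed vanishing of $\delta F_v/\delta v(x')$ is immediate.

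First I would rewrite the one-body part of the Hamiltonian using the density operator: since $\sum_{i=1}^N v(x_i)=\int d^3x\, v(x)\,\hat n_x$, equation (\ref{postprob}) becomes
\begin{equation}
    P=\frac{1}{Z_v}\, e^{-\beta(\hat K+\hat U)-\int d^3x\,[\beta v(x)+\alpha(x)]\,\hat n_x}=\frac{1}{Z_v}\, e^{-\beta(\hat K+\hat U)-\int d^3x\, u(x)\,\hat n_x}\,,
\end{equation}
so that $Z_v$, $P$, and every expectation value built from $P$ — in particular $\langle\hat K+\hat U\rangle$ and $\langle\hat n_x\rangle$ — are functionals of $u(x)$ and $\beta$ only. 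The constraint $n(x)=\langle\hat n_x\rangle=-\delta\log Z_v/\delta u(x)$ then determines $u(x)$ implicitly as a functional of $n$ and $\beta$ with no further reference to $v$; uniqueness of this $u[n;\beta]$ follows from strict convexity of $\log Z_v$ in $u$ (the inferential counterpart of the Evans--Mermin statement that $v$ is fixed by $n_{eq}$).

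Next I would substitute $E=\langle\hat H_v\rangle=\int d^3x\, v(x) n(x)+\langle\hat K+\hat U\rangle$ into the trial entropy (\ref{entopen}) to get
\begin{equation}
    S_v=\beta\langle\hat K+\hat U\rangle+\int d^3x\,[\beta v(x)+\alpha(x)]\, n(x)+\log Z_v=\beta\langle\hat K+\hat U\rangle+\int d^3x\, u(x)\, n(x)+\log Z_v\,,
\end{equation}
and plug this into the definition (\ref{intrins3}) of $F_v$. The $\langle\hat K+\hat U\rangle$ terms cancel and I am left with
\begin{equation}
    F_v(\beta;n]=-\frac{1}{\beta}\int d^3x\, u(x)\, n(x)-\frac{1}{\beta}\log Z_v\,,
\end{equation}
whose right-hand side is built purely from $\beta$, $n(x)$, and $u[n;\beta]$. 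Since $u$ does not depend on $v$ at fixed $n$ and $\beta$, neither does $F_v$, which is exactly $\delta F_v/\delta v(x')=0$.

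The one point that needs care — and the only real obstacle — is the bookkeeping of what is held fixed under $\delta/\delta v(x')$: the multiplier $\alpha(x)$ is \emph{not} held fixed (it is itself a functional of $v$ and $n$), so a naive differentiation of (\ref{postprob}) or (\ref{entopen}) at fixed $\alpha$ would be wrong. Organizing the calculation around $u(x)$ makes this automatic, because the constraint that pins $u$ is phrased entirely in terms of $u$ and $n$; alternatively one can verify the claim by an explicit chain-rule computation, checking that the terms generated by $\delta\alpha/\delta v$ cancel against $\beta\,\delta(x-x')$ by virtue of $n(x)=-\delta\log Z_v/\delta u(x)$. I would present the $u$-variable argument as the main line and relegate the chain-rule check to a remark.
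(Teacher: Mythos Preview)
Your proposal is correct and follows essentially the same route as the paper. The paper introduces the combined field $\bar\alpha(x)\equiv\beta v(x)+\alpha(x)$ (your $u$), rewrites $\beta F_v$ as a functional of $\bar\alpha$ and $n$ alone, and then argues via the chain rule that $\delta\bar\alpha/\delta v|_{\beta,n}=0$ because $n$ is determined by $\bar\alpha$ --- exactly your reasoning, with the chain-rule check you relegate to a remark appearing as the paper's main line.
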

\begin{proof}
    The crucial observation behind the DFT formalism is that $P$ and $Z_v$ depend on the external potential $v(x)$ and the Lagrange multipliers $\alpha(x)$ only through the particular combination $\bar \alpha(x)\equiv \beta v(x)+\alpha(x)$. \hlb{Substitute} equation (\ref{entopen}) in (\ref{intrins3}) to get
    \begin{equation}\label{intrins1}
        \beta F_v(\beta;n]=logZ(\beta;\bar \alpha]+\int d^3x \bar \alpha(x)n(x),
    \end{equation}
    where $Z(\beta;\bar \alpha]=Z_v(\beta;\alpha]=Tr_c e^{-\beta(\hat K+\hat U)-\int d^3x\bar \alpha(x)\n_x} \; .$
    The functional derivative of $\beta F_v$ at fixed $n(x)$ and $\beta$ is
    \begin{equation}\label{deltaf3}
        \frac{\delta (\beta F_v(\beta;n])}{\delta v(x')}\Big|_{\beta,n(x)}
        =
        \int d^3x''\frac{\delta}{\delta\bar\alpha (x'')} 
        \Big[
           log Z(\beta;\bar\alpha]+\int d^3x \bar\alpha(x)n(x) 
        \Big]
        \frac{\delta \bar\alpha(x'')}{\delta v(x')}\Big|_{\beta,n(x)} \; .
    \end{equation}
    Since $n(x')=-\frac{\delta logZ(\beta;\bar\alpha]}{\delta \bar\alpha(x')}$, keeping $n(x)$ fixed is achieved by keeping $\bar\alpha(x)$ fixed:
    \begin{equation}
        \frac{\delta \bar\alpha(x'')}{\delta v(x')}\Big|_{\beta,n(x)}=\frac{\delta \bar\alpha(x'')}{\delta v(x')}\Big|_{\beta,\bar\alpha(x)}=0 \; ,
    \end{equation}
    so that
    \begin{equation}
        \frac{\delta F_v(\beta,n]}{\delta v(x')}
        \Big|_{\beta,n(x)}
        =0\; ,
    \end{equation}
    which concludes the proof; thus, we can write down the intrinsic potential as
    \begin{equation}
        F(\beta,n(x)]=F_v(\beta,n(x)]\; .
    \end{equation}
\end{proof}
\begin{remark}
    Note that \hlb{since a} change of the external potential $v(x)$ can be compensated by a suitable change of the multiplier $\alpha(x)$ in such a way as to keep $\bar\alpha(x)$ fixed, such changes in $v(x)$ will have no effect on $n(x)$. Therefore keeping $n(x)$ fixed on the left hand side of (\ref{deltaf3}) means that $\bar\alpha(x)$ on the right side is fixed too.
\end{remark}
Now we can substitute equation (\ref{om13}) in (\ref{asli3}), and define \hlb{the} chemical potential 
\begin{equation}
    \mu\equiv\frac{-\alpha_{eq}}{\beta} \; ,
\end{equation}
to have
\begin{equation}\label{aslitar3}
    \frac{\delta}{\delta n(x')}
    \Bigg[ \int d^3x v(x)n(x)+F(\beta;n]-\mu\int d^3xn(x)
    \Bigg]_{n(x)=n_{eq}(x)}=0 \; .
\end{equation}
We can also substitute (\ref{om13}) in (\ref{core3}) to find
\begin{equation}\label{potential3}
    v(x)+\frac{\delta F}{\delta n(x)}\Big|_{eq}=\mu,
\end{equation}
which allows us to \hlb{define and} \textbf{interpret} $\mu_{in}(x;n]\equiv \frac{\delta F}{\delta n(x)}$ as \hlb{the} \textbf{intrinsic chemical potential} of the system. 
To proceed further we also split $F$ into that of ideal gas plus \hlb{the} interaction part as
\begin{equation}
    F(\beta;n]=F_{id}(\beta;n]-\phi(\beta;n].
\end{equation}
Differentiating with $\frac{\delta}{\delta n(x)}$ gives
\begin{equation}
    \beta\mu_{in}(x;n]=log(\lambda^3n(x))-c(x;n] \; ,
\end{equation}
where for \hlb{a} monatomic gas $\lambda=\Big(\frac{2\pi\hbar^2}{m}\Big)^{1/2}$. \hlb{The} additional one body potential $c(x;n]=\frac{\delta \phi}{\delta n(x)}$ is related to \hlb{the} Ornstein-Zernike direct correlation function of non-uniform fluids \cite{18,19,22} by
\begin{equation}
    c^{(2)}(x,x';n]\equiv\frac{\delta c(x;n]}{\delta n(x')}=\frac{\delta^2\phi(\beta;n]}{\delta n(x) \delta n(x')} \; .
\end{equation}
Up to here, we have proved the classical DFT theorem and clarified its connection to thermodynamic correlations, now we shall procceed further and recover the existing approximation for slowly varying densities \cite{2evans79} as an illustration.
\section {Slowly Varying Density and Gradient Expansion}\label{gradient}
We have proved that the solution to equation (\ref{aslitar3}) is the equilibrium density. But the functional $F(\beta;n]$ needs to be approximated because \hlb{the} direct calculation of $F$ involves \hlb{calculating} the canonical partition function, the task which we have been avoiding to begin with. Therefore different models of DFT may vary in their approach for guessing $F(\beta;n]$.
Now assume that we are interested in a monatomic fluid with a slowly varying external potential. In our language, it means that we use \hlb{the} approximation $\int d^3x \equiv \sum (\Delta x)^3$ where $\Delta x$ is much longer than the density correlation length and the change in density in each volume element is small compared to its average density.
\hlb{This allows us to interpret each volume element $(\Delta x)^3$ as a fluid at grand canonical equilibrium with the rest of the fluid as its thermal and particle bath.}\\ 
Similar to \cite{2evans79}, we expand $F(\beta;n]$ as
\begin{equation}\label{e311}
    F(\beta;n]=\int d^3x \Big[f_0(n(x))+f_2(n(x))|\nabla n(x)|^2+\mathcal O(\nabla^4 n(x))\Big] \; .
\end{equation}
Differentiating with respect to $n(x)$ we have
\begin{equation}\label{muin}
    \mu_{in}(x;n]=\frac{\delta F}{\delta n(x)}
    =f'_0(n(x))-f'_2(n(x))|\nabla n(x)|^2-2f_2(n(x))\nabla^2n(x).
\end{equation}
In \hlb{the} absence of \hlb{an} external potential, $v(x)=0$, the second and the third terms in \hlb{the} RHS of (\ref{muin}) vanish, and also \hlb{from equation (\ref{potential3})}, \hlb{$\mu_{in}=\mu$} , therefore we have
\begin{equation}\label{f0}
    f_0'(n)=\mu(n(x)),
\end{equation}
where $\mu(n(x))$ is the chemical potential of \hlb{a} uniform fluid with density $n=n(x)$.
\hlb{On} the other hand\hlb{,} with the assumption that each volume element behaves as if it is in grand canonical ensemble for itself under influence of both external potential and additional one body interaction $c(x;n]$ we know that \hlb{the} second derivative of $F$ is related to Ornstein-Zernike theory by
\begin{equation}
    \beta\frac{\delta^2 F}{\delta n(x)\delta n(x')}=\frac{\delta (x-x')}{n(x)}-c^{(2)}(x,x';n] \; .
\end{equation}
Therefore we have a Taylor expansion of $F$ around \hlb{the} uniform density as
\begin{align}\label{taylor}
    F[n(x)]=&F[n]+\int d^3x \Big[\frac{\delta F}{\delta n(x)}\Big]_{n_{eq}(x)}\tilde n(x)\\ \nonumber
    &+\frac{1}{2\beta}\int\int d^3x d^3x'\Big[ \frac{\delta (x-x')}{n(x)}-c^{(2)}(|x-x'|;n]\Big]_{n_{eq}(x)}\tilde n(x)\tilde n(x')+\dots,
\end{align}
where $\tilde n(x)\equiv n(x)-n$, and $c^{(2)}(|x-x'|;n]$ is \hlb{the} direct correlation function of \hlb{a} uniform fluid with density $n=n(x)$.
\hlb{The} Fourier transform of the second integral in (\ref{taylor}) gives
\begin{align}
    \frac{1}{2\beta}\int\int d^3x d^3x'\Big[ \frac{\delta (x-x')}{n(x)}-c^{(2)}(|x-x'|;n]\Big]_{n_{eq}(x)}\tilde n(x)\tilde n(x')\\ \nonumber
    =\frac{-1}{2\beta V} \sum_q
    \Big(
    c^{(2)}(q;n]-\frac{1}{n(q)}
    \Big)
    \tilde n(q)\tilde n(-q) \; ,
\end{align}
and comparing with (\ref{e311}) yields
\begin{equation}
    f_0''(n)=\frac{-1}{\beta}(a(n)-\frac{1}{n})\; ,
    \quad \quad
    f_2(n)=\frac{-b(n)}{2\beta},
\end{equation}
where the functions $a(n)$ and $b(n)$ are defined as coefficients of Fourier transform of the Ornstein-Zernike direct correlation function by $c^2(q;n]=a(n(q))+b(n(q))q^2+\dots$.\\
$b(n)$ is evaluated by linear response theory to find that
\begin{equation}\label{lrt}
    f_2(n(x))=\frac{1}{12\beta}\int d^3x' |x-x'|^2 c^{(2)}(|x-x'|;n].
\end{equation}
We can substitute equations (\ref{lrt}) and (\ref{f0}) in (\ref{muin}) and use equilibrium identity $\nabla \mu=0$ to find  the integro-differential equation
\begin{equation}
    \nabla
    \Bigg[
    v(x)+\mu(n(x))-f_2'(n(x))|\nabla n(x)|^2-2f_2(n(x))\nabla^2 n(x)
    \Bigg]_{n(x)=n_{eq}(x)}=0,
\end{equation}
that determines the equilibrium density $\n_{eq}(x)$ in the presence of external potential $v(x)$ given Ornstein-Zernike direct correlation function of uniform fluid $c^{(2)}[n(x),|x-x'|]$.
\section{Final Remarks}
We showed that the variational principle of classical DFT is a special case of applying the method of maximum entropy to construct optimal approximations in terms of those variables that capture the relevant physical information namely, the particle density $n(x)$.
\hlb{It is worth emphasizing once again: In this work we have pursued the purely conceptual goal of finding how DFT fits within the MaxEnt approach to statistical mechanics. The advantage of achieving such an insight is the potential for future applications that lie outside the reach of the current versions of DFT. As an illustration we have discussed the already well-known example of a slowly varying inhomogeneous fluid.}
Future research can be pursued in three different directions: \textbf{i)} To show that the method of maximum entropy can also be used to derive the quantum version of DFT (which is the subject of the next chapter). \textbf{ii)} To approach \hlb{the} Dynamic DFT \cite{marconi1}, generalizing the idea to non-equilibrium systems following the theory of maximum caliber \cite{caliber}. \textbf{iii)} To revisit the objective of section \ref{gradient} and construct weighted DFTs \cite{tarazona,rosenfeld} using the method of maximum entropy.
%
\chapter{Entropic Quantum DFT}
\resetfootnote 
\counterwithin{equation}{section}
\section{Introduction}
In this chapter, we are going to formulate the density functional formalism as an application of the quantum MaxEnt. In addition to opening the door to new applications of DFT, we seek at least two main objectives by the approach: i) to make it explicit how DFT fits within the MaxEnt framework ii) to clarify the already existing and well-known approximation schemes of DFT. The former objective is motivated by the fact that Jaynes \cite{1Jaynes1,1Jaynes2} showed that Thermodynamics is an inference scheme about nature, and the original DFT articles \cite{2kohn,2mermin,kohnsham,2evans79} have used thermodynamic arguments repeatedly, although the method may be applied to systems much smaller than the macroscopic systems discussed in thermodynamics, therefore an inferential interpretation of DFT is desired to eliminate the thermodynamic arguments from the framework. The second objective is of both historical and conceptual importance as well, the theory in the original form does not refer to a system of fixed number of particles, while the electron gas of interest is not necessarily free to exchange particle with the rest of the universe. Therefore the choice of grand canonical ensemble is an approximation of a possible real system to begin with, if DFT theorem is only correct for grand canonical ensemble, then the equivalence between the DFT variational principle and the time independent Schr\"odinger equation would be proven on a Fock space of electrons not a single Hilbert space of $N$ electrons, 
this ambiguity about what is exact and what is approximate in the DFT made a long term battle between Pople and Kohn \cite{qkohn2014} which finally led to a ceasefire in 1998 when the two shared the Nobel Prize in chemistry. In our entropic DFT, we prove the variational principle for the case of a Hilbert space of $N$ electrons, and then it becomes clear that the grand canonical formalism does not necessarily appear in the theorem of DFT, the use of grand-canonical formalism is needed when approximations are required. Indeed, our approach makes the theoretical aspect of DFT independent of the second quantization, although it is useful for practical purposes.
\section{Quantum Statistics}
"When we solve a quantum-mechanical problem, what we really do is divide the universe into two parts--the system of interest and the rest of the universe."\cite{cfeynman}\\
Let $|\phi_i\r$ be a complete set of vectors in the Hilbert space describing the system, and $|\theta_j\r$ for the rest of the universe. Let $|x\r$ form an orthonormal basis in the Hilbert space of the system, and $|y\r$ form an orthonormal basis in the Hilbert space of the rest of the universe:
\begin{equation}
    \phi_i(x)=\l x|\phi_i\r \quad \text{and} \quad \theta_j(y)=\l y|\theta_j\r \; .
\end{equation}
The most general vector in the total Hilbert space is given by $|\psi\r$ defined as
\begin{equation}
   |\psi\r\equiv \sum_{ij} C_{ij}|\phi_i\r|\theta_j\r \; ,
\end{equation}
and consequently the wave function of the universe plus the system given as
\begin{equation}
    \psi(x,y)\equiv \l x|\l y|\psi\r=\sum_{ij}C_{ij} \l x|\phi_i\r \l y|\theta_j\r \; .
\end{equation}
Let $\hat A$ be an operator which acts only on the system of interest, we have 
\begin{equation}
    \hat A=\sum_{ii'j}A_{ii'}|\phi_i\r|\theta_j\r\l \theta_j|\l \phi_{i'}| \; ,
\end{equation}
where $A_{ii'}\equiv \l \phi_i|\hat A|\phi_i'\r$. Therefore the expected value of $\hat A$ is given as
\begin{equation}\label{expectedA}
    \l \psi|\hat A|\psi\r
    =\sum_{ii'jj'}C_{ij}^*C_{i'j'}\l\theta_j|
    \l \phi_i|\hat A|\phi_{i'}\r
    |\theta_{j'}\r
    =\sum_{ii'j}C_{ij}^*C_{i'j}
    \l \phi_i|\hat A|\phi_{i'}\r \; 
\end{equation}
Define the density matrix $\rho_{i'i}$ with elements
\begin{equation}\label{dmatrix01}
    \rho_{i'i}=\sum_j C_{ij}^*C_{i'j}
\end{equation}
so that we can rewrite (\ref{expectedA}) as
\begin{equation}\label{expectedA02}
    \l \hat A \r=\sum_{ii'} A_{ii'}\rho_{i'i}=Tr \rho A \; .
\end{equation}
Next, define the density operator $\hro$ as
\begin{equation}
    \l \phi_{i'}|\hro|\phi_i\r\equiv \rho_{i'i} \; ,
\end{equation}
equation (\ref{dmatrix01}) implies that $\hro$ is Hermitian, therefore it can be diagonalized by an orthonormal basis set $|i\r$ with real eigenvalues $p_i$ as
\begin{equation}
    \hro=\sum_i p_i |i\r\l i| \; .
\end{equation}
Let $\hat A=1$ to rewrite equation (\ref{expectedA02}) as
\begin{equation}
    Tr\rho=\sum_i p_i=1 \; ,
\end{equation}
and let $\hat A=|i'\r\l i'|$ to rewrite equation (\ref{expectedA02}) as
\begin{equation}
    p_{i'}=Tr\rho A=\sum_j |(\l i'|\l\theta_j |)|\psi\r|^2 \geq 0 \; .
\end{equation}
\subsection{Density Operator}
Given the motivation in the previous section, now we are interested in a formulation of quantum mechanics focused on the system of interest and ignorant about the rest of the universe.\\ 
In quantum mechanics, the most informative state of knowledge about a system is represented by a state vector $|\psi\rangle$ . When we have incomplete information about the system, we call it a mixed state. Mixed states are treated with a density operator represented by the density matrix $\hro$ instead of a single wave function. The density matrix assigns probabilities $p_i$ to each member of an orthonormal basis set $|\psi_i\rangle$:
\begin{equation}\label{dm1}
    \hro\equiv\sum_{i,j}p_i\delta_{ij}|\psi_i\rangle\langle\psi_j| \; .
\end{equation}
The expected value of observable $A$ is given by
\begin{equation}
\langle \hat A\rangle\equiv \sum_ip_i\langle\psi_i|\hat A|\psi_i\rangle \; .
\end{equation}
We can insert the identity operator $\mathbb I=\sum_n|n\rangle\langle n|$, where $|n\r$ form an orthonormal basis
\begin{equation}
    \hat A=\mathbb I \hat A \mathbb I=\sum_n\sum_m |n\rangle\langle n|\hat A|m\rangle\langle m| \; ,
\end{equation}
\begin{equation}
    \hat A=\sum_n\sum_m|n\rangle A_{nm}\langle m| \; , \quad A_{nm}\equiv \langle n|\hat A|m \rangle \; . 
\end{equation}
Therefore the expected value of operator $\hat A$ is 
\begin{equation}
    \langle \hat A\rangle= \sum_i p_i \sum_n\sum_m\langle\psi_i|n\rangle \langle m|\psi_i\rangle A_{nm}
\end{equation}
\begin{equation}\label{expected01}
    \langle \hat A \rangle =\sum_n\sum_m\langle m|\sum_ip_i|\psi_i\rangle\langle\psi_i|n\rangle A_{nm} \; .
\end{equation}
Substitute equation (\ref{dm1}) in (\ref{expected01})
\begin{equation}
    \langle \hat A \rangle=\sum_n\sum_m\langle m|\hro|n\rangle A_{nm}=
    \sum_n\sum_m\rho_{mn}A_{nm}=Tr\hro\hat A \;.
\end{equation}
Note that the result $\langle \hat A\rangle=Tr\hro\hat A$ does not depend on our choice of basis. The $|n\rangle$'s could be the eigenstates of any complete set of commuting Hermitian Operators.
\subsection{Quantum Liouville Equation}
Given the definition of the density matrix in equation (\ref{dm1}), \textit{if there is no reason to update the probability distribution over} time $\frac{\p p_i}{\p t}=0$, then the time derivative of the density matrix is given as
\begin{equation}
    \frac{\p \hro}{\p t}=\sum_i p_i \bigg\lbrace
    (\frac{\p}{\p t}|\psi_i\rangle)\langle \psi_i|
    +|\psi_i\rangle(\frac{\p}{\p t}\langle \psi_i|)
    \bigg\rbrace \; ,
\end{equation}
substitute the Schr\"odinger equation $\hat H(t)|\psi(t)\rangle=i\frac{\p}{\p t}|\psi(t)\rangle$ to have
\begin{equation}
    \frac{\p \hro}{\p t}=\frac{1}{i}\sum_i p_i \bigg\lbrace
    (\hat H|\psi_i\rangle)\langle \psi_i|
    -|\psi_i\rangle(\langle \psi_i|\hat H)
    \bigg\rbrace =\frac{1}{i}[\hat H,\hro] \; .
\end{equation}
\subsection{Many-body Density Matrix in Position and Momentum Representations}
In position representation, the eigenstates $|n\rangle$ are picked to be eigenstates of operators $\lbrace \hat x_1,\hat x_2,\dots ,\hat x_N\rbrace$ ; for $N$ distinguishable particles 
\begin{equation}
    \l x|\psi_i\r\equiv \l x_1,x_2,\dots,x_N|\psi_i\r\equiv \psi_i(x_1,x_2,\dots,x_N)
\end{equation}
therefore we can rewrite equation (\ref{expected01}) for $N$ distinguishable particles
 \begin{equation}
     \langle \hat A \rangle =\int d^{3N} x \int d^{3N}x' \langle x'|\sum_ip_i|\psi_i\rangle\langle\psi_i|x\rangle A_{xx'} 
\end{equation}
\begin{equation}
    \langle \hat A \rangle =\int d^{3N} x \int d^{3N}x'\rho_{x'x}A_{xx'}=Tr\hro\hat A \; ,\quad  \rho_{x'x}=\langle x'|\hro|x\rangle \; .
\end{equation}
and for $N$ \textbf{identical} particles as
\begin{equation}
     \langle \hat A \rangle =\frac{1}{N!}\sum_{\text{particle permutations}}\hat P\int d^{3N} x \int d^{3N}x' \langle x'|\sum_ip_i|\psi_i\rangle\langle\psi_i|x\rangle A_{xx'} \; ,
\end{equation}
\begin{equation}
    \langle \hat A \rangle =\int d^{3N} x \int d^{3N}x'\rho_{x'x}A_{xx'}=Tr\hro\hat A \; ,\quad  \rho_{x'x}=\frac{1}{N!}\sum_{\text{particle permutations}}\hat P\langle x'|\hro|x\rangle \; ,
\end{equation}
where $\sum_P$ stands for summation over all particle permutations, and the permutation operator $\hat P$ is defined by 
\begin{equation}
    \hat P\l x'_1,x'_2,\dots,x'_N| \hro|x_1,x_2,\dots,x_N \r\equiv
    \l x'_1,x'_2,\dots,x'_N| \hro|\hat Px_1,\hat Px_2,\dots,\hat Px_N \r \;.
\end{equation}
Note that $\psi_i(x_1,\dots,x_N)$ is symmetric for bosons and antisymmetric for fermions, therefore they admit different sets of \textbf{particle permutations}:
\begin{equation}
    \rho_{x'x}=\frac{1}{N!}\sum_P\hat P\langle x'|\hro|x\rangle
    \quad \quad \text{bosons}
\end{equation}
\begin{equation}
    \rho_{x'x}=\frac{1}{N!}\sum_P(-1)^{P}\hat P\langle x'|\hro|x\rangle
    \quad \quad \text{fermions}
\end{equation}
 and $(-1)^{P}=\pm 1$ for even/odd permutations.\\
 Similarly, in momentum representation, the expected value of operator $\hat A$, is given by
 \begin{equation}
    \langle \hat A \rangle =\int d^{3N} q \int d^{3N}q'\rho_{q'q}A_{qq'}=Tr\hro\hat A \; ,\quad  \rho_{q'q}=\frac{1}{N!}\sum_{\text{particle permutations}}\hat P\langle q'|\hro|q\rangle \; ,
\end{equation}
where 
\begin{equation}
    \hat P\l q'_1,q'_2,\dots,q'_N| \hro|q_1,q_2,\dots,q_N \r\equiv
    \l q'_1,q'_2,\dots,q'_N| \hro|\hat Pq_1,\hat Pq_2,\dots,\hat Pq_N \r \;.
\end{equation}
\section{Quantum MaxEnt}\label{quantuminference}
The formal difference between classical and quantum regimes reduces significantly when it comes to statistical mechanics, just like we would update a prior probability distribution to the posterior probability distribution given new information, we can update prior density matrix of a quantum system in response to new information in form of expected value of observables using the quantum method of maximum entropy.\\
Given prior density matrix
\footnote{The prior density matrix is usually picked to be the identity matrix up to multiplication by a constant, but here we do not limit the general discussion and proceed with arbitrary prior density matrix and we postpone choosing of prior to later.}
$\hat\sigma$, and new information in form of expected values of hermitian operators $\hat A_i$'s
    \begin{equation}
        \langle \hat A_i\rangle=A_i
    \end{equation}
    the posterior density matrix $\hat\rho$ is found by maximizing quantum relative entropy
    \begin{equation}\label{rel01}
        S_r[\hat\rho|\hat\sigma]=-Tr\hat\rho (log\hat\rho-log\hat\sigma),
    \end{equation}
    subject to constraints
    \begin{subequations}\label{dmdefined}
    \begin{equation}
        Tr\hat A_i\hat\rho-A_i=0
    \end{equation}
    \begin{equation}
        Tr\hat\rho-1=0.
    \end{equation}
    \end{subequations}
So we need to solve the variational equation
       \begin{equation}\label{maxent01}
          \delta \Big[S_r+\sum_i\alpha_i[A_i-Tr(\hat A_i\hat     \rho)]+\alpha_0[1-Tr\hat \rho]\Big]=0 
       \end{equation}  
Where operators $\hat A_i$'s, expected values $A_i$'s, and Lagrange multipliers $\alpha_0$, and $\alpha_i$'s are kept fixed, while density matrix $\hat\rho$ varies.     \\
Equation (\ref{maxent01}) can be solved to give the maximizing density operator
   \begin{equation}\label{dm01}
       \hat\rho=e^{log\hat\sigma-\alpha_0\hat I-\sum\alpha_i \hat A_i}
   \end{equation}
in which, $\hat I$ is the identity matrix.\\
Using Baker–Campbell–Hausdorff formula, since the identity matrix commutes with any matrix, we can rewrite the posterior density matrix as 
   \begin{equation}\label{dm02}
       \hat\rho=\frac{1}{Z}e^{log\hat\sigma-\sum\alpha_i\hat A_i},
   \end{equation}
where the partition function $Z$, is the normalization factor defined as 
   \begin{equation}\label{partition003}
      Z \equiv e^{\alpha_0}=Tre^{log\hat\sigma-\sum\alpha_i \hat A_i}.
   \end{equation}
Therefore one can substitute equation (\ref{dm02}) in equation (\ref{rel01}) to find maximized entropy S as
   \begin{equation}\label{38}
      S=\sum\alpha_i A_i+logZ,
   \end{equation}
And consequently 
\begin{equation}\label{dent101}
    dS=\sum\alpha_idA_i+\sum d\alpha_iA_i+\frac{dZ}{Z}.
\end{equation}
Next, calculate $dZ(\alpha_i)$,            
   \begin{equation} 
      dZ=\sum\frac{\partial Z}{\partial \alpha_i}d\alpha_i.  
   \end{equation}
   We can use matrix exponential differential identity  \cite{wilcox}  
\footnote{
Differential of a matrix exponential is given by
\begin{equation}
    \frac{d}{dt}e^{X(t)}=e^X\frac{1-e^{-ad_X}}{ad_X}\frac{dX}{dt}
\end{equation}
where $ad_XY$ is the adjoint action of X on Y, and 
\begin{equation}
    ad_XY=[X,Y]
\end{equation}
if $X$ and $Y$ belong to the same Hilbert space.
}
to calculate $dZ$ :
  \begin{equation} 
      dZ=Tr[\sum_i\frac{\partial e^{log\hat\sigma-\sum_j\alpha_j\hat A_j}}{\partial \alpha_i}d\alpha_i]
   \end{equation}

\begin{equation}\label{dzmiddle}
    dZ=Tr[
    \sum_iexp(log\hat\sigma-\sum\alpha_j\hat A_j)
    \frac{1-e^{-ad_{exp(log\hat\sigma-\sum\alpha_j\hat A_j)}}}{ad_{exp(log\hat\sigma-\sum\alpha_j\hat A^j)}}(-\hat A_i)d\alpha_i
    ].
\end{equation}
Divide equation (\ref{dzmiddle}) by $Z$ to find
\begin{equation}
    \frac{dZ}{Z}=-Tr[\sum_i
    \hat\rho\frac{1-e^{-ad_{Z\hat\rho}}}{ad_{Z\hat\rho}}\hat A_i d\alpha_i
    ]
\end{equation}
We can expand the exponential in above equation to find that
\begin{equation}
    \frac{dZ}{Z}=-\sum_i A_i d\alpha_i-\sum_i\sum_{n=1}^\infty \frac{1}{(n+1)!} Tr[\hat\rho(-Zad_{\hat\rho})^n\hat A_i]d\alpha_i,
\end{equation}
and by substitution in equation (\ref{dent101}) we find differential of entropy
\begin{equation}\label{dent001}
    dS=\sum_i\alpha_idA_i-\sum_j\sum_{n=1}^\infty\frac{(-Z)^n}{(n+1)!}\langle ad^n_{\hat\rho}\hat A_j\rangle d\alpha_j.
\end{equation}
The second term in the RHS of equation (\ref{dent001}) vanishes by the cyclic property of trace, for any Hermitian operator $\hat A$
\begin{equation}
    \langle \hat ad_{\hro}\hat A \rangle=Tr\hro[\hro,\hat A]=Tr\hro\hro\hat A-Tr\hro\hat A\hro=0 
\end{equation}
and expected value of higher order adjoint actions of $\hro$ vanish by the same argument
\begin{equation}
   \langle ad_{ \hro}^n\hat A\rangle=0 \; ,
\end{equation}
and we end up with 
\begin{equation}\label{dsq}
    dS=\sum_i\alpha_idA_i\; .
\end{equation}
\subsection{Contact Structure of MaxEnt}\label{possiblecontact}
Equation (\ref{dsq}) allows us to define the vanishing contact 1-form $\omega_q$ for quantum MaxEnt as
\begin{equation}
    \omega_q\equiv dS-\sum_i\alpha_iA_i=0 \; .
\end{equation}
Similar to the classical case \cite{qahmadariel}, for any quantum MaxEnt with $m$ constraints, we define contact structure $\lbrace\mathbb T,\omega_q\rbrace$, in which manifold $\mathbb T$ is $2m+1$-dimensional with coordinates $\lbrace q_0,q_1,\dots,q_m,p_0,\dots,p_N \rbrace$. This means that the physically relevant manifold $\mathbb M$ is an $m$-dimensional submanifold on $\mathbb T$ locally represented by vanishing one form $\omega_q$; or equivalently, $\mathbb M$ is determined by $m+1$ equations
\begin{equation}
    q_0\equiv S(\lbrace A_i \rbrace)
\end{equation}
\begin{equation}
    p_i\equiv \alpha_i=\frac{\partial S}{\partial A_i}\; .
\end{equation}
Legendre Transformations are coordinate transformations on $\mathbb T$ under which contact form $\omega_q$ is invariant:
\begin{equation}
    q_0\rightarrow q_0-\sum_{j=1}^lp_jq_j \; ,
\end{equation}
\begin{equation}
    q_i\rightarrow p_i, \quad p_i\rightarrow -q_i \quad \text{for } 1\leq i \leq l \; .
\end{equation}
The existence of the prescribed contact structure makes it legitimate to utilize Legendre Transformations in MaxEnt, regardless of thermal equilibrium and the physical significance of the so called "free energy" functions. 
\subsection{Formalisms of Statistical Mechanics}
Any formalism of statistical mechanics -in the quantum context- is a quantum MaxEnt which consists of a trinity of elements: i) \hlb{One must choose the Hilbert space that describes the system of inference.} ii) \hlb{The prior is chosen to be uniform.} iii) \hlb{One must select the constraints that represent the information that is relevant to the problem in hand}.

\subsubsection{Canonical}
The canonical formalism is designed to treat many-body systems consisting of $N$ fixed number of particles in thermal equilibrium with a heat bath, the formalism is constructed similar to its classical counterpart:
\begin{enumerate}
 \item The Hilbert space of the system is a symmetrized/antisymmetrized product of $N$ single particle Hilbert spaces for bosons/fermions.  
 \item The prior density matrix is uniform: $\hsig\propto\mathbb I$ .
 \item The relevant information is that the system is in a thermal equilibrium, therefore the expected value of energy is fixed: $Tr\hro\hat H=E$ .
\end{enumerate}
The prescription above, implies that one must maximize $S_r[\hro|\hsig]$ subject to $Tr\hro\hat H=E$ and $Tr\hro=1$ to find the posterior density matrix, analog to (\ref{dm02}) as
\begin{equation}
    \hro^*=\frac{1}{Z}e^{-\beta\hat H}
\end{equation}
and the maximized entropy
\begin{equation}
    S_r[\hro^*|\hsig]=S(E)=\beta E+logZ(\beta) \; .
\end{equation}
\subsubsection{Grand Canonical}
The grand canonical formalism, is a quantum MaxEnt for inference about a system of many particles in both thermal and particle equilibrium with a large bath. 
\begin{enumerate}
 \item The Hilbert space of the system is the direct sum of all $n$ particle Hilbert spaces for $0\leq n \leq\infty$ (Fock space) \; .  
 \item The prior density matrix is uniform: $\hsig\propto\mathbb I$ .
 \item The relevant information is that system is in a thermal and particle equilibrium, therefore the expected value of the Hamiltonian operator and the expected value of the particle number operator are fixed: $Tr\hro\hat H=E$ , $Tr\hro\hat N=N$ .
\end{enumerate}
The MaxEnt process with the prescription above yields the posterior density matrix analog to the equation (\ref{dm02}) as
\begin{equation}
    \hro^*=\frac{1}{\Xi}e^{-\beta(\hat H-\mu\hat N)}
\end{equation}
and the maximized entropy analog to (\ref{38})
\begin{equation}
    S_r[\hro^*|\hsig]=S(E,N)=\beta E-\beta\mu N+log\Xi \; .
\end{equation}
where $\Xi=Tre^{-\beta(\hat H-\mu \hat N)}$ is the grand canonical partition function.
\subsection{Using Entropy to Find Optimal Approximations: the Bogolyubov Inequality}
It has been shown \cite{qkevinthesis,1kevin2017} that the relative entropy $S_r[\hro|\hsig]$ defined in equation (\ref{rel01}) provides a ranking of all $\hro$'s relative to a single $\hsig$, on the other hand, the relative entropy is non-positive and assumes zero only for $\hro=\hsig$ \cite{nielsenchuang}
so that we have 
\footnote{
Nielsen and Chuang proved positivity of quantum relative entropy at page 513, 
\begin{equation}
    S(\rho||\sigma)\geq 0 \quad \text{with equality iff } \rho=\sigma\; ,
\end{equation}
but it is just because their definition of relative entropy differs from ours by a multiplication by minus one; apart from that, the proof concludes our claim.
}
\begin{equation}\label{326}
    S_r[\hro|\hsig]\leq 0 \quad \text{with equality iff } \hro=\hsig .
\end{equation}

Similar to the classical framework developed in \cite{catichatseng}, given an exact density matrix $\hsig$, and a family of trial density matrices 
\begin{equation}\label{32801}
    \hro=\hro_\theta
\end{equation}
parametrized by $\theta=\lbrace \theta_1,\theta_2,\dots\rbrace$, the member of the trial family $\hro_{\theta^*}$ that best approximates $\hsig$ is the one which maximizes the relative entropy $S_r[\hro_\theta|\hsig]$
\begin{equation}\label{pptheta001}
  \frac{\p}{\p \theta}
    \Bigg[
       S_r[\hro_\theta|\hsig]
    \Bigg]_{\theta=\theta^*}
    =0 \; .
\end{equation} 
\\
In a special case, when $\hsig$ and $\hro_{\theta}$ are exponential density matrices defined as
\begin{equation}\label{327}
    \hat \sigma=\frac{1}{Z}e^{-\beta \hat H} \; ,
\end{equation}
\begin{equation}\label{328}
    \hro_\theta=\frac{1}{Z_\theta}e^{-\beta \hat H_\theta} \; ,
\end{equation}
substitute equations (\ref{327}) and (\ref{328}) in equation (\ref{rel01}) to have
\begin{equation}\label{331}
    S[\hro_\theta|\hsig]=\beta(\langle\hat H_\theta-\hat H\rangle_\theta-F_\theta+F)
\end{equation}
where $\langle . \rangle_\theta=Tr(\hro_\theta (.))$, $Z=e^{-\beta F}$, $Z_\theta=e^{-\beta F_\theta}$. Substitute (\ref{331}) in (\ref{326}) to find that  
\begin{equation}
    F \leq F_\theta + \langle \hat H-\hat H_\theta\rangle_\theta \; ,
\end{equation}
which is known as Bogolyubov inequality; with equality if and only if $\hsig$ itself is a member of the trial family $\hro_\theta$'s.  
\section{Density Functional Formalism}\label{quantumDFT}
Again, any equilibrium ensemble formalism includes three crucial ingredients
\begin{itemize}
    \item The Hilbert space associated with the time independent Schr\"odinger equation of the system
    \item Uniform prior density matrix
    \item Inferential constraints
\end{itemize}
For non-relativistic electron gas at equilibrium, the wave function $\psi(x)$ is an antisymmetrized product of $N$ two-spinor orbitals, therefore the time independent Schr\"odinger equation is given as
\begin{equation}
    \hat H|\psi\rangle=E|\psi\rangle \; ,
\end{equation}
where (in absence of magnetic field) the Hamiltonian is given by
\begin{equation}
    \hat H_v=\hat H^{(0)}+\hat V=\hat T+\hat U+\hat V\equiv \sum_{i=1}^N\hat p_i^2+\frac{1}{2}\sum_{j\neq k}^N\frac{1}{|\hat x_j-\hat x_k|}+\sum_{l=1}^N v(\hat x_l) \; .
\end{equation}
\subsection{Constructing the Trial States}
Since we are interested in thermal properties of an inhomogeneous gas, we need trial states that describe both \textbf{thermal equilibrium} and \textbf{inhomogeneity}, these states will be constructed using the method of maximum entropy. Thermal equilibrium is imposed by a constraint on the expected value of energy and the inhomogeneity of the system is incorporated by a constraint on the expected value of density at each point of space. Therefore we maximize the entropy  
\begin{equation}
    S_r[\hro|\hat {\mathbb I}]=Tr\hro log\hro,
\end{equation}
subject to constraints
\begin{subequations}
\begin{equation}
   Tr\hro=1\; ,\quad Tr\hro \hat H=E,
\end{equation}
\begin{equation}\label{44b}
   Tr\hro\hat n_x=n(x) \quad \text{where} \int d^3x n(x)=N\; ,
\end{equation}
\end{subequations}
where $\n_x\equiv\sum_{i=1}^N\delta(\hat x_i-x)$. The density $n(x)$ plays the role of the $\theta_i$ parameters in equation (\ref{328}). \\
There is one constraint for each point in $\mathbb R^3$ in line (\ref{44b}), but the expected density function $n(x)$ is not arbitrary, it must conform to the fact that the number of particles $N$ is fixed by definition of the wave function. \textit{Here we deal with a MaxEnt under some constraints, the constraints themselves are limited together under a condition imposed by the definition of the Hilbert space.} Proceed with the MaxEnt analog to equation (\ref{maxent01}) to find the trial density matrix
\begin{equation}\label{intdenq}
    \hro=\hro_n=\frac{1}{Z}e^{-\beta \hat H_v-\int d^3x\alpha(x)\n_x}\;  .
\end{equation}
where $Z_v(\beta;\alpha]$ is the trial partition function, analog to equation (\ref{partition003}), defined as
\begin{equation}\label{partition001}
    Z_v(\beta;\alpha]=Tre^{-\beta\hat H_v-\int d^3x\alpha(x)\n_x}\; .
\end{equation}
The Lagrange multipliers $\alpha(x)$ are determined implicitly by
\begin{equation}\label{ghofli}
    \frac{\delta logZ}{\delta\alpha(x)}=-n(x)
\end{equation}
and the corresponding trial entropy is
\begin{equation}\label{45}
    S_v(E;n]\equiv S_r[\hro_n|\mathbb I]=\beta E+\int d^3x \alpha(x)n(x)+log Z_v(\beta;\alpha]\; ,
\end{equation}
where the Lagrange multiplier function $\alpha(x)$ is bound to the condition
\begin{equation}\label{condition002}
    \int d^3 x n(x)=-\int d^3x \frac{\delta logZ_v(\beta;\alpha]}{\delta \alpha(x)}=N \; .
\end{equation}
\begin{remark}Note that in all three equations (\ref{intdenq}), (\ref{partition001}), and (\ref{45}), $\alpha(x)$ is not an arbitrary function, the only permissible $\alpha(x)$'s are those which keep number of particles conserved as is shown in equation (\ref{condition002}).
\end{remark}
\begin{remark}
An important symmetry in the DFT formalism is that $\hro_n$ and $Z_v(\beta;\alpha]$ depend on the external potential $v(x)$ and the Lagrange multipliers $\alpha(x)$ only through the particular combination
\begin{equation}\label{crucial001}
    \bar \alpha(x)\equiv \beta v(x)+\alpha(x) \; .
\end{equation}
Substitute (\ref{crucial001}) in (\ref{partition001}) to find that
\begin{equation}
    Z_v(\beta,\alpha]=Tr e^{-\beta \hat H^{(0)}-\int d^3 x \bar \alpha (x)\n_x}\equiv Z(\beta,\bar\alpha] \; , 
\end{equation}
and
\begin{equation}
    n(x)=\frac{\delta Z(\beta;\bar\alpha]}{\delta \bar\alpha(x)} \; .
\end{equation}
The symmetry implies that changing $\alpha(x)$ and $v(x)$ will not affect expected density $n(x)$, as long as $\bar\alpha(x)$ remains unchanged; as we shall see later, the core argument behind the DFT theorem is based on this symmetry.
\end{remark}
\begin{remark}\label{remarkalpha01}
    A shift in $\alpha(x)$ can be compensated by shifting the external potential to keep $\bar\alpha(x)$ unchanged, therefore if $\alpha(x)$ is permissible in equation (\ref{condition002}), so is $\alpha(x)+\textit{const.}$ .
\end{remark}
\subsection{The Equilibrium Density Matrix}
Now that we have constructed our desired family of density matrices $\hro_n$, it is time to approximate the canonical density matrix of the system by the best fitting member of $\hro_n$'s. The canonical density matrix is defined as
\begin{equation}\label{sigcan}
    \hat c=\frac{1}{Z_v(\beta)}e^{-\beta\hat H_v} \; ,
\end{equation}
where $Z_v(\beta)=Tre^{-\beta\hat H_v}$ .\\
Substitute equations (\ref{sigcan}), and (\ref{intdenq}) in (\ref{rel01}) to find that
\begin{equation}\label{relent005}
    S_r[\hro_n|\hat c]=-Tr\hro_n log\hro_n-Tr \hro_n log \hat c=S_v(E;n]-\beta E-logZ(\beta) \; ,
\end{equation}
where $E$ is the expected energy calculated by the trial density matrix $\hro_n$.\\
Maximization of the relative entropy
\begin{equation}\label{deltas01}
    \delta S_r[\hro_n|\hat c]=0 \quad \text{for fixed $\beta$} 
\end{equation}
gives the closest $\hro_n$ to $\hat c$ . Note that the temperature $\beta$ is fixed in (\ref{deltas01}) because it determines $\hat c$; therefore we can perform Legendre transformation of the form $S_r\longrightarrow S_r+\beta E$ to have
\begin{equation}\label{deltas02}
    \delta\Big(
        S_r+\beta E 
    \Big)=0 \quad \text{for fixed $E$} \; .
\end{equation}
In analogy to equation (\ref{pptheta001}), the equilibrium density $n_{eq}(x)$ is found by maximizing the relative entropy $S_r[\hro_n|\hat c]$ defined in (\ref{relent005})
\begin{equation}\label{49}
    \frac{\delta}{\delta n(x')}\bigg[
    S_v(E;n]-\beta E+\alpha_{eq}[N-\int d^3xn(x)]
    \bigg]_{n(x)=n_{eq}(x)}=0 \quad \text{for fixed $\beta$}\; .
\end{equation}
Let us point out that the first two terms in equation (\ref{49}) make a Legendre transform of the trial entropy, which is a function of $\beta$ and a functional of $n(x)$, the Massieu functional $\tilde S_v(\beta,n]$ defined as
\begin{equation}\label{massieuQ}
    \tilde S_v(\beta;n]\equiv S_v-\beta E \; .
\end{equation}
So we can rewrite  equation (\ref{deltas01}) as
\begin{equation}\label{massvarQ}
    \frac{\delta}{\delta n(x')}\Bigg[
       \tilde S_v(\beta;n]-\alpha_{eq}\int d^3xn(x)
    \Bigg]=0 \quad \text{for fixed $\beta$} \; ,
\end{equation}
and also (\ref{deltas02})
\begin{equation}\label{311}
    \frac{\delta}{\delta n(x')}\bigg[
    S_v(E;n]+\alpha_{eq}[N-\int d^3xn(x)]
    \bigg]_{n(x)=n_{eq}(x)}=0 \quad \text{for fixed $E$}\; .
\end{equation}
Note that the term $\alpha_{eq}[N-\int d^3x n(x)]$ takes care of the fact that $n(x)$ is not an arbitrary function in $S_v(E;n]$.

Up to here, we have shown that there exists a \textbf{trial density functional entropy} $S_v(E;n]$ which assumes its maximum value at equilibrium density function $n_{eq}(x)$, but so far it is not guaranteed that this variational principle is equivalent to the DFT principle which was derived in 1965 \cite{2mermin}.
\subsection{The DFT Theorem}
Next, substitute (\ref{massieuQ}) and (\ref{45}) in (\ref{massvarQ}) to find
\begin{equation}\label{equi}
    \frac{\delta}{\delta n(x')}\Bigg[
       log Z_v(\beta;\alpha]+\int d^3x[\alpha(x)-\alpha_{eq}]n(x)
    \Bigg]=0\quad \text{for fixed $\beta$}.
\end{equation}
The first term of equation (\ref{equi}) is given by
\begin{equation}
    \frac{\delta}{\delta n(x')} logZ_v(\beta;\alpha]\Big|_\beta=\int d^3x \frac{\delta logZ_v}{\delta \alpha(x)}\frac{\delta\alpha(x)}{\delta n(x')},
\end{equation}
therefore we can rewrite (\ref{equi}) as
\begin{equation}\label{gold00Q}
    \int d^3x \Bigg[
       \frac{\delta logZ_v(\beta;\alpha]}{\delta \alpha(x)}+n(x)
    \Bigg]\frac{\delta\alpha(x)}{\delta n(x')}
    =\int d^3x
    \Bigg[
       \alpha_{eq}-\alpha(x)
    \Bigg]\frac{\delta n(x)}{\delta n(x')}.
\end{equation}
The LHS of (\ref{gold00Q}) vanishes by (\ref{ghofli}), and $\frac{\delta n(x)}{\delta n(x')}=\delta(x-x')$, therefore we end up with
\begin{equation}\label{alpha}
       \alpha(x')=\alpha_{eq}\; ,
\end{equation}
and
\begin{equation}
    \frac{\delta logZ_v}{\delta \alpha(x)}\Big|_{\alpha(x)=\alpha_{eq}}=-n_{eq}(x) \; .
\end{equation}
Substituting (\ref{alpha}) into (\ref{intdenq}) yields the equilibrium density matrix equal to that of canonical formalism
\begin{equation}\label{finalpost}
   \hro_{eq}=\frac{1}{Z_{v,eq}}e^{-\beta\hat H_v-\alpha_{eq}\int d^3x\hat n_x}
   =
   \frac{1}{Z_{v,eq}}e^{-\beta\hat H_v-\alpha_{eq}N}=\hat c \; ,
\end{equation}
where $Z_{v,eq}(\beta,\alpha_{eq})=Tr e^{-\beta \hat H_v-\alpha_{eq}N}$ .

Although we could proceed further with the Massieu function $\tilde S_v(\beta;n(x)]$, in order to have a notation more similar to the original DFT articles \cite{2kohn,2mermin,2evans79}, it is convenient to define \textbf{density functional potential} $\Omega_v(\beta;n]$ as
\begin{equation}\label{Omega}
    \Omega_v(\beta;n]\equiv\frac{-\tilde S_v(\beta;n]}{\beta}
    =-\int d^3x\frac{\alpha(x)}{\beta}n(x)-\frac{1}{\beta}logZ_v(\beta;\alpha] \; , 
\end{equation}
so that the maximization of $S_v(E;n]$ (\ref{45}) in the vicinity of equilibrium, is equivalent to the minimization of $\Omega_v(\beta;n(x)]$ (\ref{Omega}) around the same equilibrium point
\begin{equation}\label{asli}
    \frac{\delta}{\delta n(x')}
    \Bigg[ \Omega_v(\beta;n]+\frac{\alpha_{eq}}{\beta}\int d^3xn(x)
    \Bigg]_{n(x)=n_{eq}(x)}=0 \; .
\end{equation}
After we find $\Omega_v$, we just need to note that $\alpha(x)=-\beta\frac{\delta \Omega_v}{\delta n(x)}$ and substitute in equation (\ref{alpha}) to find the "core integro-differential equation of DFT"\cite{2evans79} as 
\begin{equation}\label{core}
    \nabla\Big(
    \frac{\delta \Omega(\beta;n]}{\delta n(x)}
    \Big)_{n_{eq}(x)}=0 \; ,
\end{equation}
which implies that
\begin{equation}\label{omvar}
    \Omega_{v;eq} \leq \Omega_v(\beta;n] \; ,
\end{equation}
 where
\begin{equation} \label{ome}
    \Omega_{v,eq}(\beta;n]=-\frac{\alpha_{eq}}{\beta}\int d^3x n(x)-\frac{1}{\beta}log Z_v^*(\beta,\alpha_{eq}).
\end{equation}

Combine equations (\ref{Omega}), (\ref{massieuQ}), and (\ref{45}) to find that
\begin{equation}\label{om}
    \Omega_v(\beta;n]=\int d^3x v(x)n(x)+\langle \hat K+\hat U\rangle-\frac{S_v(E;n]}{\beta} \; .
\end{equation}
We are now ready to state the DFT theorem, let us define the so called \textbf{intrinsic density functional potential} $F_v$ as
\begin{equation}\label{intrins}
    F_v(\beta;n]\equiv \langle \hat K+\hat U\rangle-\frac{S_v}{\beta} \; ,
\end{equation}
to have
\begin{equation}\label{om1}
    \Omega_v(\beta;n]=\int v(x)n(x)+F_v(\beta;n]\; .
\end{equation}
\begin{theorem}[Density Functional]
The intrinsic functional potential $F_v$ is a functional of density $n(x)$ and is independent of the external potential $v(x)$:
   \begin{equation}
       \frac{\delta F_v(\beta;n]}{\delta v(x')}=0 \quad \text{for fixed $\beta$ and $n(x)$.}
   \end{equation}
\end{theorem}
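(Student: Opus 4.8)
The plan is to transcribe the proof of the classical DFT theorem of Section~\ref{cdensityformalism} almost word for word, the only change being that traces over the Hilbert space replace the classical phase-space integrals $Tr_c$. The whole argument rests on the symmetry recorded in the remark following~(\ref{crucial001}): the trial density matrix $\hro_n$ of~(\ref{intdenq}) and the trial partition function $Z_v(\beta;\alpha]$ of~(\ref{partition001}) depend on the external potential $v(x)$ and on the multipliers $\alpha(x)$ only through the single combination $\bar\alpha(x)\equiv\beta v(x)+\alpha(x)$, so that $Z_v(\beta;\alpha]=Z(\beta;\bar\alpha]\equiv Tr\,e^{-\beta\hat H^{(0)}-\int d^3x\,\bar\alpha(x)\n_x}$.

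First I would insert the trial entropy~(\ref{45}) into the definition~(\ref{intrins}) of $F_v$, using $\langle\hat K+\hat U\rangle=\langle\hat H_v\rangle-\int d^3x\,v(x)n(x)=E-\int d^3x\,v(x)n(x)$, to arrive at an expression of the form
\begin{equation}
    \beta F_v(\beta;n]=\log Z(\beta;\bar\alpha]+\int d^3x\,\bar\alpha(x)n(x),
\end{equation}
in which $v(x)$ has disappeared except implicitly through $\bar\alpha$. Then I would differentiate with respect to $v(x')$ at fixed $\beta$ and fixed $n(x)$, using the chain rule through $\bar\alpha(x'')$,
\begin{equation}
    \frac{\delta(\beta F_v)}{\delta v(x')}\Big|_{\beta,n}=\int d^3x''\,\frac{\delta}{\delta\bar\alpha(x'')}\Big[\log Z(\beta;\bar\alpha]+\int d^3x\,\bar\alpha(x)n(x)\Big]\frac{\delta\bar\alpha(x'')}{\delta v(x')}\Big|_{\beta,n}.
\end{equation}

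The decisive observation is that $n(x'')=-\delta\log Z(\beta;\bar\alpha]/\delta\bar\alpha(x'')$ --- the quantum form of~(\ref{ghofli}), which follows from the matrix-exponential differentiation identity together with the cyclic property of the trace used in Section~\ref{quantuminference} --- so that holding $n(x)$ fixed is the same as holding $\bar\alpha(x)$ fixed. Consequently $\delta\bar\alpha(x'')/\delta v(x')|_{\beta,n}=\delta\bar\alpha(x'')/\delta v(x')|_{\beta,\bar\alpha}=0$, the entire right-hand side vanishes, and dividing by $\beta$ yields $\delta F_v(\beta;n]/\delta v(x')=0$; one then writes $F(\beta;n]\equiv F_v(\beta;n]$.

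The only genuinely delicate point --- the entropic reincarnation of the Hohenberg-Kohn uniqueness lemma --- is the justification that ``$n(x)$ fixed'' and ``$\bar\alpha(x)$ fixed'' are interchangeable, i.e.\ that the map $\bar\alpha\mapsto n$ is invertible. I would deduce this from strict concavity of $\log Z(\beta;\bar\alpha]$ as a functional of $\bar\alpha$: its second functional derivative is (minus) the density-density covariance, which is negative definite for a non-degenerate trial family, so a given $n(x)$ fixes $\bar\alpha(x)$ uniquely up to the harmless additive constant already noted in Remark~\ref{remarkalpha01}. Following the style of the classical section I would record this either inside the proof or as a remark right after the theorem, and I would also point out, as there, that any shift of $v(x)$ can be absorbed into a shift of $\alpha(x)$ leaving $\bar\alpha(x)$ --- hence $n(x)$ --- untouched, which is precisely why keeping $n(x)$ fixed on the left leaves $\bar\alpha(x)$ fixed on the right.
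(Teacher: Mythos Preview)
Your proposal is correct and follows essentially the same route as the paper: rewrite $\beta F_v$ in terms of $\bar\alpha$ alone, apply the chain rule through $\bar\alpha$, and use $n(x)=-\delta\log Z/\delta\bar\alpha(x)$ to conclude that fixing $n(x)$ freezes $\bar\alpha(x)$ so the derivative vanishes. Your added justification of the invertibility of $\bar\alpha\mapsto n$ via strict concavity of $\log Z$ is a nice touch that the paper omits (it simply asserts the equivalence and relegates the ``compensating shift'' observation to a remark), but otherwise the arguments are line-for-line the same.
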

\begin{proof}
    The crucial observation behind the DFT formalism is that $\hro_n$ and $Z_v$ depend on the external potential $v(x)$ and the Lagrange multipliers $\alpha(x)$ only through the particular combination $\bar \alpha(x)$ defined in (\ref{crucial001}). Substitute (\ref{45}), and (\ref{crucial001}) in equation (\ref{intrins}) to get
    \begin{equation}\label{intrins1}
        \beta F_v(\beta;n]=logZ(\beta;\bar \alpha]+\int d^3x \bar \alpha(x)n(x),
    \end{equation}
    where $Z(\beta;\bar \alpha]=Z_v(\beta;\alpha]=Tr_c e^{-\beta(\hat K+\hat U)-\int d^3x\bar \alpha(x)\n_x} \; .$
    The functional derivative of $\beta F_v$ at fixed $n(x)$ and $\beta$ is
    \begin{equation}
        \frac{\delta (\beta F_v(\beta;n])}{\delta v(x')}\Big|_{\beta,n(x)}
        =
        \int d^3x''\frac{\delta}{\delta\bar\alpha (x'')} 
        \Big[
           log Z(\beta;\bar\alpha]+\int d^3x \bar\alpha(x)n(x) 
        \Big]
        \frac{\delta \bar\alpha(x'')}{\delta v(x')}\Big|_{\beta,n(x)} \; .
    \end{equation}
    Since $n(x')=-\frac{\delta logZ(\beta;\bar\alpha]}{\delta \bar\alpha(x')}$, keeping $n(x)$ fixed is achieved by keeping $\bar\alpha(x)$ fixed:
    \begin{equation}
        \frac{\delta \bar\alpha(x'')}{\delta v(x')}\Big|_{\beta,n(x)}=\frac{\delta \bar\alpha(x'')}{\delta v(x')}\Big|_{\beta,\bar\alpha(x)}=0 \; ,
    \end{equation}
    so that
    \begin{equation}
        \frac{\delta F_v(\beta;n]}{\delta v(x')}
        \Big|_{\beta,n(x)}
        =0\; ,
    \end{equation}
    which concludes the proof; thus, we can write down the intrinsic potential as
    \begin{equation}
        F(\beta;n]=F_v(\beta;n]\; .
    \end{equation}
\end{proof}
\begin{remark}
    Note that a change of the external potential $v(x)$ can be compensated by a suitable change of the multiplier $\alpha(x)$ in such a way as to keep $\bar\alpha(x)$ fixed, such changes in $v(x)$ will have no effect on $n(x)$. Therefore keeping $n(x)$ fixed on the left hand side of (\ref{intrins1}) means that $\bar\alpha(x)$ on the right side is fixed too.
\end{remark}
Now we can substitute equation (\ref{om1}) in (\ref{asli}), and define the chemical potential 
\begin{equation}
    \mu\equiv\frac{-\alpha_{eq}}{\beta} \; ,
\end{equation}
to have
\begin{equation}\label{aslitar}
    \frac{\delta}{\delta n(x')}
    \Bigg[ \int d^3x v(x)n(x)+F(\beta;n]-\mu\int d^3xn(x)
    \Bigg]_{n(x)=n_{eq}(x)}=0 \; .
\end{equation}
We can also substitute (\ref{om1}) in (\ref{core}) to find
\begin{equation}\label{potential}
    v(x)+\frac{\delta F}{\delta n(x)}\Big|_{eq}=\mu,
\end{equation}
which allows us to \textbf{interpret} $\mu_{in}(x;n]\equiv \frac{\delta F}{\delta n(x)}$ as \textbf{intrinsic chemical potential} of the system. 
\section{Approximation Schemes of DFT}
In section \ref{quantumDFT} we proved that there exists a density functional $\Omega_v(\beta,n]$ defined as
\begin{equation}\label{omega001}
    \Omega_v(\beta;n]=\int d^3 x v(x)n(x)+F(\beta;n] 
\end{equation}
which assumes its minimum at equilibrium density, exact calculation of $F(\beta;n]$ requires calculation of the canonical partition function of the system, this is what an approximation scheme of statistical mechanics must avoid to begin with, therefore all different DFT models are different approximations of $F(\beta;n]$; in this section, as an illustration of the method, we reformulate some well-known approximations.
\subsection{Free Energy Approximation for Almost Constant Density}
Similar to \cite{2kohn} and \cite{2mermin}, consider an electron gas
\footnote{
Note that our approach is different from \cite{2kohn,2mermin}, in the sense that the number of particles is fixed, but our argument follows the similar pathway.
}
with mean density $n_0$, imposed to a small electric field produced by an external positive charge distribution $n_{ext}(x)$
\begin{equation}\label{next001}
    n_{ext}(x)=\lambda\int \frac{d^3q}{(2\pi)^3}a(q)e^{-i x.q} \; ,
\end{equation}
for an almost uniform density, $n_0$ with a small deformation $\tilde n(x)$ write the electron density to the first order in $\lambda$
\begin{equation}\label{69}
    n(x)=n_0+\tilde n(x)=n_0+\lambda\int \frac{d^3 q}{(2\pi)^3} b_1(q)e^{-i x.q} \;.
\end{equation}
Let us write down all the relevant perturbations in the same form as functions of $\lambda$:
\begin{align}
    n_{ext}(x)=&0+\lambda n_{ext}^{(1)}(x)+\dots \label{next003} \\
    n(x)=&n^{(0)}+\lambda n^{(1)}(x)+\dots \label{n130} \\
    \alpha(x)=&0+\lambda \alpha^{(1)}(x)+\dots \label{alpha132}\\
    \Omega_v=&\Omega_v^{(0)}+\lambda \Omega_v^{(1)}+\lambda^2 \Omega_v^{(2)}+\dots 
\end{align}

Note that we discussed in remark \ref{remarkalpha01} that shifting $\alpha(x)$ by a constant doesn't change the physics, therefore we have chosen $\alpha^{(0)}=0$.\\
Compare (\ref{next001}) and (\ref{next003}) to have
\begin{equation}\label{next002}
n_{ext}^{(1)}(x)=\int \frac{d^3 q}{(2\pi)^3}a(q)e^{-ix.q} \; .
\end{equation}
and compare (\ref{69}) and (\ref{n130}) to find that
\begin{align} \label{n002}
    n^{(0)}=&n_0  \\ \nonumber
    n^{(1)}(x)=&\int \frac{d^3 q}{(2\pi)^3}b_1(q)e^{-ix.q} \; .
\end{align}
To calculate $\Omega_v(\beta;n]$ Combine (\ref{n130}) and  (\ref{alpha132}) with equation (\ref{Omega}) to find 
\begin{equation}\label{om006}
    \Omega_v(\beta;n]\approx -\int d^3x \frac{\lambda \alpha^{(1)}(x)}{\beta}\Bigg[
       n_0+\lambda n^{(1)}(x)
    \Bigg]
    -\frac{1}{\beta} logZ_v(\beta;\alpha] \; .
\end{equation}
In order to expand $Z_v(\beta;\alpha]$ in terms of $\lambda$, write down the Taylor expansion of the functional around $\alpha(x)=0$
\begin{align}\label{logz005}
    logZ_v(\beta;\alpha]=&
    logZ_v(\beta,0]\Big|_{v(x)=0}+\int d^3 x\frac{\delta logZ_v}{\delta \alpha(x)}\Bigg|_{v(x)=0}\delta \alpha(x) \\ \nonumber
    &+\frac{1}{2}\int d^3x d^3x' \frac{\delta^2 logZ_v}{\delta \alpha(x)\delta \alpha(x')}\Bigg|_{v(x)=0} \delta \alpha(x) \delta \alpha(x') \; .
\end{align}
Since 
we have set $n(x)=n_0$ in the absence of external potential, taking the functional derivative of $Z_v$ given in (\ref{partition001}) yields
\begin{align}
    \frac{\delta logZ_v}{\delta{\alpha(x)}} \Big|_{v(x)}&=-n_0 \label{n005} \\
    \frac{\delta^2 logZ_v}{\delta \alpha(x)\delta \alpha(x')} \Big|_{v(x)=0}&=n_0^{(2)}(x,x')=n_0^{(2)}(|x-x'|) \label{n2005}
    \; ,
\end{align} 
where $n_0^{(2)}(|x-x'|)$ is the density correlation function of uniform electron gas at temperature $\beta$ and mean density $n_0$. the zeroth order term
\begin{equation}\label{partition005}
    Z_v(\beta,0]\Big|_{v(x)=0}=Tr e^{-\beta\hat H^(0)}=Z_N(\beta)
\end{equation}
is the canonical partition function of a uniform electron gas in absence of external potential.\\
Equation (\ref{alpha132}) tells that 
\begin{equation}\label{deltaalpha005}
    \delta \alpha(x)\equiv\lambda \alpha^{(1)}(x)
\end{equation}
Substitute (\ref{n005}), (\ref{n2005}), and (\ref{partition005}), in equation (\ref{logz005}) to find that
\begin{align}\label{partition006}
    logZ_v(\beta;\alpha]\approx & logZ_N(\beta)-\lambda\int d^3x n_0  \alpha^{(1)}(x) \\ \nonumber
    &+\frac{\lambda^2}{2}\int d^3x d^3x' n_0^{(2)}(|x-x'|)\alpha^{(1)}(x)\alpha^{(1)}(x')+\dots \; .
\end{align}
Substitute (\ref{partition006}) in (\ref{om006}) to get
\begin{align}
    \Omega_v(\beta;n]\approx& \frac{1}{\beta} logZ_N(\beta) \\ \nonumber
    &+\lambda^2
    \Bigg[
       -\frac{1}{\beta}\int d^3x \alpha^{(1)}(x) n^{(1)}(x)
    \Bigg]      
          +
       \lambda^2\Bigg[  
          \frac{1}{2\beta}\int d^3x'd^3x'\alpha^{(1)}(x)\alpha^{(1)}(x')n_0^{(2)}(|x-x'|)
       \Bigg] \; .
\end{align}
Now 
that we are done with perturbative calculations, we can take back $\lambda\rightarrow 1$ and consequently use $\alpha^{(1)}(x)=\alpha(x)$ and $n^{(1)}(x)\approx n(x)-n_0$, to have 
\begin{align}\label{om344}
    \Omega_v(\beta;n]\approx & \frac{1}{\beta} logZ_N(\beta) \\ \nonumber
    &+
    \Bigg[
       -\frac{1}{\beta}\int d^3x \alpha(x) [n(x)-n_0]
    \Bigg]      
          +
       \Bigg[  
          \frac{1}{2\beta}\int d^3xd^3x'\alpha(x)\alpha(x')n_0^{(2)}(|x-x'|)
       \Bigg] \; .
\end{align}
The last approximation is to replace the local chemical potential $\alpha(x)=\alpha(x;n]$ with the chemical potential of a uniform electron gas with density $n(x)$
\begin{equation}
    \alpha(x;n]\approx-\beta\mu(n(x)) \; ,
\end{equation}
so that
\begin{align}\label{omjadid100}
    \Omega_v(\beta;n]\approx & \frac{1}{\beta} logZ_N(\beta) \\ \nonumber
    &+
    \Bigg[
       \int d^3x \mu(n(x)) [n(x)-n_0]
    \Bigg]      
          +
       \Bigg[  
          \frac{\beta}{2}\int d^3xd^3x'\mu(n(x))\mu(n(x'))n_0^{(2)}(|x-x'|)
       \Bigg] \; .
\end{align}
And we can substitute (\ref{omjadid100}) in (\ref{asli}) and define the chemical potential $\alpha_{eq}\equiv -\beta\mu$ to find an approximation for the DFT variational principle which is written solely in terms of thermodynamic properties of a uniform electron gas
\begin{align}
    \frac{\delta}{\delta n(x'')}
    \Bigg[
       &\int d^3x \mu(n(x))[n(x)-n_0]
       \\ \nonumber 
       &+\frac{1}{2}\int d^3xd^3x'\mu(n(x))\mu(n(x'))n_0^{(2)}(|x-x'|)
       -\mu\int d^3x n(x)
    \Bigg]_{n(x)=n_{eq}(x)}=0 \; .
\end{align}
Recall that 
\begin{equation}
    v(x)+\mu_{in}(x)=\mu
\end{equation} 
and once again replace $\mu_{in}(x)$ with $\mu(n(x))$ to find
\begin{align}
    \frac{\delta}{\delta n(x'')}
    \Bigg[
       &\int d^3x v(x)n(x)+\int d^3x \mu(n(x))n_0
       \\ \nonumber 
       &-\frac{1}{2}\int d^3xd^3x'\mu(n(x))\mu(n(x'))n_0^{(2)}(|x-x'|)
    \Bigg]_{n(x)=n_{eq}(x)}=0 \; .
\end{align}
This means that we reduced analysis of an inhomogeneous gas to that of a uniform gas for the case of almost constant density. 
\subsection{Kohn-Sham Model}
Another approach to approximate $\Omega_v$ is the Kohn-Sham Model for slowly varying potential, the condition of slowly varying potential is of course one degree loser than the almost constant density, in a sense that although the density does not change dramatically from one volume element to the next, but the differences may accumulate to make a considerable difference of density between far away coordinates.\\
The novelty of the Kohn-Sham model is that instead of calculating approximate value of $\Omega_v$, they showed that for a slowly varying density, the DFT variational principle gives an equilibrium density equal to that of a non-interacting electron gas in the presence of an effective potential. This means that the Kohn-Sham model simplifies an interacting many-body Schr\"odinger equation, to that of a single particle in the presence of an effective potential.\\
Let us see how the Kohn-Sham model fits in our framework. First split $F(\beta;n]$ into the Coulomb interaction and the rest: regarding the long range nature of the Coulomb interaction we can approximate it with the classical one:
\begin{equation}\label{intrins01}
    F(\beta;n]=\int d^3xd^3x' \frac{n(x)n(x')}{|x-x'|}+G(\beta;n] \; ,
\end{equation}
where $G(\beta;n]$ is independent of $v(x)$ just like $F(\beta;n]$. Compare equation (\ref{intrins}) with (\ref{intrins01}) to find
\begin{equation}
    G(\beta;n]=\l \hat K \r-S_v(E;n]\; .
\end{equation}
Again, split $G(\beta;n]$ into that of a non-interacting electron gas and the rest 
\begin{equation}\label{4425}
    G(\beta;n]=G_s(\beta;n]+F_{xc}(\beta;n] \; ,
\end{equation}
where 
\begin{equation}
    G_s(\beta;n]=\l \hat K \r_{0}-S_{0,v}(E;n] \; ,
\end{equation}
and the non-interacting trial kinetic energy $\l \hat K \r_0$, and non-interacting trial entropy $S_{0,v}$ are found by 
\begin{align}
    \l . \r_{0}=Tr\hro_{0,n}(.) \\ 
    \hro_{0,n}=\frac{e^{-\beta \hat H_0-\int \alpha(x)\n(x)}}{Tre^{-\beta \hat H_0-\int \alpha(x)\n(x)}}\\
    \hat H_0=\hat K+\hat V\\
    S_{0,v}=-Tr\hro_{0,n}log\hro_{0,n}
\end{align}
where $\hro_{0,n}$ is the non-interacting trial density matrix and $\hat H_0$ the non-interacting Hamiltonian operator. Substitute equations (4--4--25) and (4--4--23) in (4--4--1) to have an almost exact approximation of the functional $\Omega_v$ 
\begin{equation}\label{almostexact}
    \Omega_v(\beta;n]=\int d^3xv(x)n(x)+\frac{1}{2}\int d^3xd^3x'\frac{n(x)n(x')}{|x-x'|}+F_{xc}(\beta;n] \; .
\end{equation}
The expression (\ref{almostexact}) is \textbf{almost exact} and not exact in a sense that the Coulomb potential is approximated by its classical counterpart, otherwise the expression $F_{xc}(\beta;n]$ is exactly the difference between the intrinsic free energy $F(\beta;n]$ and the intrinsic free energy of a non-interacting electron gas with the same density $F_0(\beta;n]$
\begin{equation}
    F_{xc}(\beta;n]=F(\beta;n]-F_0(\beta;n] \; .
\end{equation}
The non-interacting intrinsic free energy is  
\begin{equation}\label{fnot}
    F_0(\beta;n]=\int d^3xf_0(n(x))n(x) \; ,
\end{equation}
where $f_0(n(x))$ is the intrinsic free energy per particle for a uniform non-interacting electron gas with density $n=n(x)$.
Although, it is not guaranteed that one can write a $F(\beta;n]$ in such a form as well, indeed one can not, because exchange correlation energy is non-local by definition, but it is a reasonable approximation for a thermal system to assume that entanglement of particles only appear in short distances, so that we can approximate the total exchange correlation energy with sum of that of volume elements
\begin{equation}\label{Fxc}
    F_{xc}(\beta;n]=\int d^3x f_{xc}(n(x))n(x) \; ,
\end{equation}
where $f_{xc}(n(x))$ is the exchange correlation free energy per particle for a uniform electron gas with density $n=n(x)$.\\
Combine equations (\ref{Fxc}), (\ref{almostexact}), (\ref{4425}), and (\ref{aslitar}) to get
\begin{equation}\label{KS}
    0=\phi(x)+\frac{\delta G_s(\beta;n]}{\delta n(x)}+\mu_{xc}(n(x))-\mu \; ,
\end{equation}
where
\begin{equation}
    \phi(x)\equiv v(x)+\int d^3x'\frac{n(x')}{|x-x'|} \; ,
\end{equation}
and
\begin{equation}
    \mu_{xc}(n(x))=\frac{d[n(x)f_{xc}(x)]}{d[n(x)]} \; .
\end{equation}
Equation (\ref{KS}) is identical to equation (3.6) of reference \cite{kohnsham}, it is "identical to the corresponding equation for a system of non-interacting electrons in the effective potential $\phi(x)+\mu_{xc}(x)$. Its solution is therefore determined by the following system of equations:
\begin{equation}\label{kohnshameq}
    \big[
    -\frac{1}{2}\nabla^2+\phi(x)+\mu_{xc}(n(x))
    \big]\psi_i
    =\epsilon_i\psi_i \; ,
\end{equation}
\begin{equation}
    n(r)=\sum_{i=1}^N\frac{|\psi_i(x)|^2}{e^{\beta(\epsilon_i-\mu)}+1} \; .
\end{equation}
\section{Final Remarks}
In this chapter, we reformulated the quantum DFT from the MaxEnt point of view. We showed that the variational principle of DFT is a special case of quantum entropic inference, where the method is incorporated to treat an inhomogeneous electron gas. we clarified that the practical advantage of DFT is that it enables one to analyze an inhomogeneous electron gas by just knowing characteristics of the homogeneous gas. As a matter of fact we did not even mention the second quantization in the theoretical framework of DFT. The use of the grand canonical formalism for approximation of $F_{xc}$ (or equivalently of $\mu_{xc}$) is intuitively justifiable. \\
There 
are three different approximations in the theory which finally lead to the famous Kohn-Sham equation (\ref{kohnshameq}): i) the Coulomb interaction is treated classically ii) the volume elements of the system are assumed to contribute additively to the exchange correlation part of the intrinsic free energy iii) the expected number of particles in a volume element is negligible compared to the total number of particles.

 
\chapter{Conclusion}
\resetfootnote 
In this research we reconstructed the concept of the density functional theory within the framework of entropic inference. We mainly explored the theoretical aspects of the theory, starting with a non-personalistic and objective Bayesian interpretation of the probability theory and Caticha's theory of entropic inference, we showed that the variational principle of DFT is equivalent to the maximum entropy principle. Furthermore, we showed that this can be formulated within a canonical ensemble of fixed number of particles.\\
Along our way, we showed that both the classical and the quantum maximum entropy principles can be recognized as contact structures which are invariant under Legendre transformations.
Apart from the possible practical advantages of our approach, not only we integrated the density functional formalism as an straightforward implementation of MaxEnt which can inspire new equilibrium theories to emerge from the framework, but also we reopened a line of research which is primarily concerned with the theoretical aspects of the DFT. The main achievement of this research is that DFT has been derived according to the same principles of entropic inference that are the foundations on which statistical mechanics and quantum mechanics are built.

\addtocontents{toc}{\parindent0pt\vskip12pt APPENDICES} 
\end{document}